\newcommand{\minbox}[2]{%
  \mathmakebox[\ifdim#1<\width\width\else#1\fi]{#2}}
\newcommand\Tstrut{\rule{0pt}{2.6ex}}         
\newcommand\Bstrut{\rule[-0.9ex]{0pt}{0pt}}   
\newlength{\tempheight}
\newlength{\tempwidth}
\newcommand{\rowname}[1]
{\rotatebox{90}{\makebox[\tempheight][c]{#1}}}
\newcommand{\columnname}[1]
{\makebox[\tempwidth][c]{#1}}
\def\BState{\State\hskip-\ALG@thistlm}
\newcounter{example}[section]
\theoremstyle{plain}
 \newtheorem{thm}{Theorem}
 \newtheorem{prop}{Proposition}
\theoremstyle{remark}
\def\BibTeX{{\rm B\kern-.05em{\sc i\kern-.025em b}\kern-.08em
    T\kern-.1667em\lower.7ex\hbox{E}\kern-.125emX}}
\begin{document}

\voffset=0.05in
\textheight=9.28in

\title{Signaling Rate and Performance of \\RIS Reconfiguration and Handover Management \\ in Next Generation Mobile Networks}

\author{\IEEEauthorblockN{Mounir Bensalem and Admela Jukan}\\
\IEEEauthorblockA{Technische Universit\"at Braunschweig, Germany;
\{mounir.bensalem,  a.jukan\}@tu-bs.de}
}

\maketitle

\begin{abstract}
We consider the problem of signaling rate and performance for an efficient control and management of RIS reconfigurations and handover in next generation mobile networks. To this end, we first analytically determine the rates of RIS reconfigurations and handover using a stochastic geometry network model. We derive closed-form expressions of these rates while taking into account static obstacles (both known and unknown), self-blockage, RIS location density, and variations in the angle and direction of user mobility. Based on the rates derived, we  analyze the signaling rates of a sample novel signaling protocol, which we propose as an extension of an handover signaling protocol standard in mobile networks.  The results quantify the impact of known and unknown obstacles on the RIS and handover reconfiguration rate as function of device density and mobility.  We use the proposed analysis to evaluate the signaling overhead due to RIS reconfigurations, as well as to dimension the related RIS control plane server capacity in the network management system.  To the best of our knowledge, this is the first analytical model to derive the closed form expressions of RIS reconfiguration rates, along with handover rates, and relate its statistical properties to the signaling rate and performance in next generation mobile networks.
\end{abstract}

\begin{IEEEkeywords}
RIS, handover, stochastic geometry, static blockages, self-blockage, mobility models, mmWave communications,  signaling protocols, network management.
\end{IEEEkeywords}

\section{Introduction}

The demand for high capacity of wireless connections has become increasingly critical in the evolution of 5G/6G mobile networks. To this end, mm-wave and terahertz (THz) frequency bands have emerged as pivotal communication technologies to guarantee Quality of Service (QoS) in 6G networks \cite{tripathi2021millimeter}. Both are characterized by high capacity of communication links, high directionality of the antennas, and also high sensitivity to signal attenuation due to obstacles.  To address the issue of blockages,  reconfigurable intelligent surfaces (RISs) and relays are employed to assist the communication by creating alternative paths. By employing passive elements that can manipulate electromagnetic waves, RISs offer opportunities to mitigate channel impairments and improve signal propagation. The incorporation of RISs holds the potential to transform the functioning of wireless networks, providing numerous advantages that could lead to an enriched communication experience for users, such as  digital twin, holographic telepresence, and immersive applications. 

From the signaling and network management perspective the embedding of the RISs is an open issue. Most previous work focused on the wireless channel properties in presence of RISs and assumed that reconfigurations of wireless paths over different RISs were seamless and passive, i.e., did not require signaling effort. In reality, however, when a wireless path needs to be re-computed and re-established, it requires activation of a signaling protocol from the corresponding network management functions. With help of signaling, an active path with RIS needs to be switched to an alternative path with another RIS, even in cases where both connections are established over the same base station. The signaling rate for the activation of new paths depends on the RIS reconfiguration rate, i.e., number of times a new RIS needs to be allocated, which needs to be either measured, estimated or predicted. In case multiple base stations are involved, a handover maybe necessary.  Both RIS reconfigurations and handover necessitate the dimensioning of the corresponding servers in the control and management, such as for the Mobility Management Entity (MME) functions in the current standards. In fact, the related entities in LTE and NR mobility architecture, e.g., 3GPP TS 23.501 \cite{etsi2018LTE} and ETSI TS 136.420 \cite{etsi20185g} standards need to be extended to control and manage the allocations of RISs.

In this paper, we provide a novel analysis of RIS reconfiguration and handover rates, whereby RIS reconfigurations and handovers are considered in course of the dynamic reallocation of inadequate THz/mmW wireless paths in response to the availability of alternative paths with sufficient quality. Based on a stochastic geometry model, we analyze these rates under various conditions and derive closed-form expressions that include static obstacles, consideration of self-blockage, RISs location density and variations in angle and speed distributions in a generic wireless mobile network, including multiple RISs and base stations. We finally propose and analyze the related signaling rates of a sample signaling protocol, designed as an extension of today's known handover signaling standard.  The RIS reconfiguration probability analysis is validated by simulations for two mobility models (random waypoint and random direction mobility models). The results quantify the impact of obstacles on the RIS and handover reconfiguration rate as function of device density and mobility. The proposed analysis are shown as useful when dimensioning and evaluating the signaling overhead due to RIS reconfigurations, which directly impacts the planning of the related control plane server capacity in the network management system. 

To the best of our knowledge, this is the first analytical model to derive closed form expressions of RIS reconfiguration rates along with handover rates, and relate its statistical properties to the signaling rate and performance in next generation mobile networks. In a nutshell, the paper provides the following novel contributions:
 \begin{itemize}
\item Closed form analysis of RIS reconfiguration rate and handover rate due to mobility and obstacles. 
\item Analysis of the inter dependencies of RIS reconfiguration -and handover rates in the mobile network.
\item Evaluation of the impact of mobility on the receiver node in terms of channel quality, in presence of both known and unknown obstacles
\item Design of a sample signaling protocol, as an extension of the current mobile network standard, to actuate handover and RIS reconfigurations, and the related analysis of the signaling performance.
\end{itemize}

The rest of the paper is organized as follows. Section \ref{sec:relwork} presents related work. Section \ref{sec:arch} introduces the reference network architecture, including data and signaling planes. Section \ref{sec:sysmodel} derives analytically the RIS reconfiguration and handover rates as well as the it presents the signaling protocol and analysis its performance. Section \ref{sec:results} presents the theoretical and simulation results. Section \ref{sec:conclusion} concludes the paper.

\begin{table*}[t]
\centering
\begin{tabular}{SSSSSSSSSS} \toprule
    {Reference} & { Single eNB} & {Multiple eNB} & {Single RIS} & {Multiple RIS}& {HO} & {RR} & {SO} & {SB}& {Sig} \\   \midrule
    { \cite{zhang2024adaptive}} & {\checkmark }& & {} & {\checkmark} &  & {\checkmark}  &  {\checkmark}   &  &  \\
    {  \cite{palitharathna2024handover}} &  {}  & {\checkmark} & {\checkmark} & &  {\checkmark} &  {\checkmark}  & {\checkmark}  &  & \\
    {  \cite{zhang2021reconfigurable}} &  {}  & {\checkmark} & {\checkmark} & {} & {}  &  {\checkmark}  & {}  & {}  & {} \\
    {  \cite{you2022deploy}} &  {\checkmark}  & {} & {} & {\checkmark} & {}  &  {\checkmark}  & {}  & {}  & {} \\
 
   { \cite{liu2024sustainable}} &  {\checkmark}  & &  & {\checkmark} &   &  {\checkmark}  &   &  & \\
    
 {  \cite{wei2023equivalent, zhang2024discrete}}&   & {\checkmark}&    & {\checkmark} &  {\checkmark}  &  {\checkmark}  &  &   & \\
 {  \cite{nor2024mobile}}&   &{\checkmark} &  & {\checkmark}  &  &  {\checkmark}  &  {\checkmark}  &   {\checkmark}  &  \\
  {  \cite{10462513}}&   &{\checkmark} &  & {\checkmark}  &  &  {\checkmark}  &  {\checkmark}  &     &  \\
     { \cite{joshi2019association, okaf2020analysis}}  &  & {\checkmark}  &  &  & {\checkmark}  & &   {\checkmark}  & &  \\
     { \cite{jiao2021enabling}}  &   & {\checkmark} &  {\checkmark}  & &  {\checkmark}  &  &  {\checkmark}   & & \\ 
     {\cite{okaf2021analysis, iqbal2023analysis}}  &    &{\checkmark} & & & {\checkmark}  &  &   &   {\checkmark} & \\   
     {  \cite{ulvan2010study,NR2019}}&   & {\checkmark} & & &  {\checkmark}  &  &       & &  {\checkmark} \\
     {\cite{mollel2019handover, ghosh2020analyzing }}  &   & {\checkmark} & & &  {\checkmark}  &  &    {\checkmark}  & &  {\checkmark}  \\   \midrule 
    {Our paper}  &  {} &  {\checkmark} &  {} & {\checkmark} & {\checkmark}  &  {\checkmark}  & {\checkmark}  &  {\checkmark}  & {\checkmark}   \\ \bottomrule
    
\end{tabular}
\caption{Related work and our contribution. Abbreviations: evolved NodeBs (eNB), Handover (HO), RISs, RIS Reconfiguration (RR), Static Obstacle (SO), Self-Blockage (SB), and Signaling (Sig) }
\label{tab:review}
\end{table*}

 \section{Related Work} \label{sec:relwork}
Table \ref{tab:review} summaries the main contributions of this paper with respect to the current literature. We list the work based on considerations of base stations (BS) or evolved NodeBs (eNB), Handover (HO), RISs, RIS Reconfiguration (RR), Static Obstacle (SO), Self-Blockage (SB),  and Signaling (Sig).  

 A generic wireless mobile network including   multiple base stations (BSs) and multiple RISs, was studied in  \cite{ozdogan2019intelligent, zhang2021reconfigurable, you2022deploy}. Similar to  \cite{ozdogan2019intelligent}, we make assumptions of geometic distributions, but also consider the mobility of users. We also show that paths always need to be reconfigured when RIS is placed closer to the receiver.  Majorly related to our work is a stochastic geometry-based model proposed in  \cite{zhang2021reconfigurable} which analyses important metrics of the ergodic capacity and the coverage probability. We use this work  as foundation to analyze the path loss, capacity of links under different conditions, but extend \cite{zhang2021reconfigurable} to consider the mobility as well as to derive the rate of RIS reconfigurations. We also make assumptions proposed in  \cite{you2022deploy}, where the path loss parameters are considered dependent on the deployment of RISs with respect to the BS, which impacts the path loss performance. 

In regard to RIS reconfigurations, for both active and passive RISs, we need to change the phase shift during reconfigurations, whereby the active RIS is able to add power to enhance the signal (also referred to as relay). The removal of blind spots and the ability to activate alternative links to users plays a major role in case of a bad channel quality or blockages \cite{zhang2024adaptive}. Paper \cite{zhang2024adaptive} proposes a learning approach to adaptively switch among RISs  (i.e., RIS reconfiguration) in real time, considering the existence of  obstacles, albeit only for an area with a single base station (BS). In our work, we consider multiple BS for both the HO and RR rates. The RIS reconfiguration is also affected by the accessibility of a path between the eNB-RIS-UE, where the placement of the RISs plays a major role \cite{ozdogan2019intelligent}. Some placements however increase the blockages probability, due to  obstacles, and self-blockages. Papers \cite{bai2014coverage, andrews2016modeling} investigated the impact of  correlation in the blockage event of coexisting links, however without RISs or relays. These papers motivate our analytical model where we also quantified the impact of obstacles on the communication performance, but we included RISs in the analysis.

In regard to handover, paper \cite{jiao2021enabling} proposed  RIS-aided handover scheme in mmW networks using deep reinforcement learning (DRL) to reducing frequent handovers caused by link blockages. In \cite{palitharathna2024handover}  RISs were used for handover management in  an indoor visible light communication  system, whereby the goal is to route the signals through RIS device  to redirect the communication links when the direct links are blocked by obstacles. Recently, an important research effort provided analysis of the impact of blockages on HO rate  in 5G cellular networks, where  the mm-wave Base Stations (BSs) are chosen based on signal quality.  Papers \cite{okaf2020analysis, jain2019impact} majorly focused on the theoretical analysis that provides an expression of the HO, whereas papers \cite{joshi2019association, okaf2021analysis, iqbal2023analysis} evaluate the impact of blockages on handover.  In \cite{okaf2020analysis}, the impact on HO rate was analyzed for blockages due to one obstacle, whereby  the location of BSs was generated using  the homogeneous Poisson Point Process, while the mobile user is moving with a certain speed and direction angle. While this paper assumed that an handover from a BS to another BS is possible to alleviate performance issues, in our work we consider RISs, and the related assumptions about blockages, and self blockage, to analyze both the RIS reconfiguration rate and handover rate.

Signaling procedure in a scenario that considers the existence of RISs has not been studied in any previous work and has not been defined in any standard focused on implementing RIS hardware and controllers in wireless network \cite{liu2024sustainable}, to the best of our knowledge. Paper \cite{liu2024sustainable} presented the state-of-the-art of the work status led  by the ETSI ISG RIS, presenting the  typical deployment scenarios of RISs.   Table \ref{tab:standard} gives an overview  of the published and ongoing ETSI technical reports that focus on RIS integration into the network. They focus mostly on physical layer challenges and requirements for RIS deployment and providing a general view on the communication models, channel models, and channel estimation, i.e., do not include signaling and network management aspects. The reference standards on signaling solutions are 3GPP LTE \cite{ulvan2010study} and 5G New Radio (NR) architecture\cite{NR2019}. In the literature,  \cite{mollel2019handover, ghosh2020analyzing } studied the HO performance using signaling procedure proposed in the standards. 

In our work, we extend the network scenario to RISs, and propose an updated mobility signaling protocol to this end, of which the performance we also analyze. To the best of our knowledge, no analytical model so far was used to derive closed form expressions of RIS reconfiguration rates along with handover rates, to relate their statistical properties to the signaling rate and performance. More in detail, we extend the handover (HO) signaling protocol in 3GPP LTE \cite{ulvan2010study} and 5G New Radio (NR) architecture\cite{NR2019}  to account for RIS reconfigurations and HO. We introduce the key features and entities regarding mobility and  RIS reconfiguration. We propose the concept of serving and neighboring RIS, akin to the concepts of serving and neighboring base stations. The signaling protocol is activated at the derived rate from the analysis and includes both handover and RIS reconfiguration functions. We use the network management functions in LTE and NR mobility architecture as well as the related interfaces, based on 3GPP TS 23.501 \cite{etsi2018LTE} and ETSI TS 136.420 \cite{etsi20185g} standard to examplify a network management system which includes a novel server, related to as RIS-M, akin to the Mobility Management Entity (MME). We not only analytically derive the resulting signaling rate, but we also use it to propose to dimension the related server size in the prospective network management architecture.

\begin{table}
\centering
\begin{tabular}{SSS} \toprule
    {ETSI Document} & {Scope} & {Date of publication} \\ \midrule
  { GR RIS 001\cite{ETSIGRRIS001}} & {Deployment scenarios} & {Apr 2024} \\ 
  & {and requirements}& \\ \midrule
   {GR RIS 002\cite{ETSIGRRIS002}}  & {Deployment challenges} & {Aug 2023}   \\
    &{Impact on network}& \\
    &{ architecture}& \\ \midrule
 {GR RIS 003\cite{ETSIGRRIS003}}  & {Communication model} & {Jun 2023
 }  \\
 &{Channel models}&\\ 
 &{Channel estimation} &\\ 
 & {Evaluation methods} & \\\midrule
 {GR RIS 004\cite{ETSIGRRIS004}}& {RIS implementation} & {Aug 2024}  \\ 
 & {and practical considerations} & \\\midrule
    {GR RIS 005\cite{ETSIGRRIS005}}  & {Analysis of Diversity} & {Jul 2024}  \\
    &{ and Multiplexing} & \\ \midrule
    {GR RIS 006\cite{ETSIGRRIS006}}  & {Solutions for Multi-} & {Sep 2024}   \\
    &{Functional RIS (MF-RIS)}&\\ 
    &{Channel modelling}&\\ \midrule
    {GR RIS 007\cite{ETSIGRRIS007}}   &  {Near-field channel} & {Dec 2025}   \\
    &{modeling}&\\ \bottomrule
    
\end{tabular}
\caption{Published and ongoing ETSI documents studying the implementation of RISs in the network.}
\label{tab:standard}
\end{table}

%
\section{Reference Architecture} \label{sec:arch}
\subsection{RIS Reconfiguration vs. Handover Management}
In this paper, we distinguish between RIS reconfigurations and handover management based on the following. \emph{RIS reconfiguration} is defined as the dynamic reallocation of an active path with RIS to an alternative path with another RIS, whereby both connections are established over the same base station. The \emph{handover,} on the other hand, involves the connections over different base stations. Figure  \ref{fig:RIS_HO_RRR} illustrates the RIS reconfigurations  and handover as an extension to the standard HO process from 3GPP LTE \cite{ulvan2010study} and 5G New Radio (NR) architecture\cite{NR2019}. Here, due to mobility, UE can stay connected either via RIS reconfiguration or handover (HO). In case of RIS reconfigurations, the wireless path is reconfigured from the serving RIS to the target RIS; for HO, the same is implemented from the serving eNB to the target eNB. In  Figure \ref{fig:RIS_HO_RRR}, we assume that both the source and target cells are located in the same eNB (Intra-eNB HO), whereby UE is assigned to a direct link to the serving eNB, while otherwise a path through the target RIS is allocated. 

\begin{figure}[H]
 \centering 
   \includegraphics[scale=0.53]{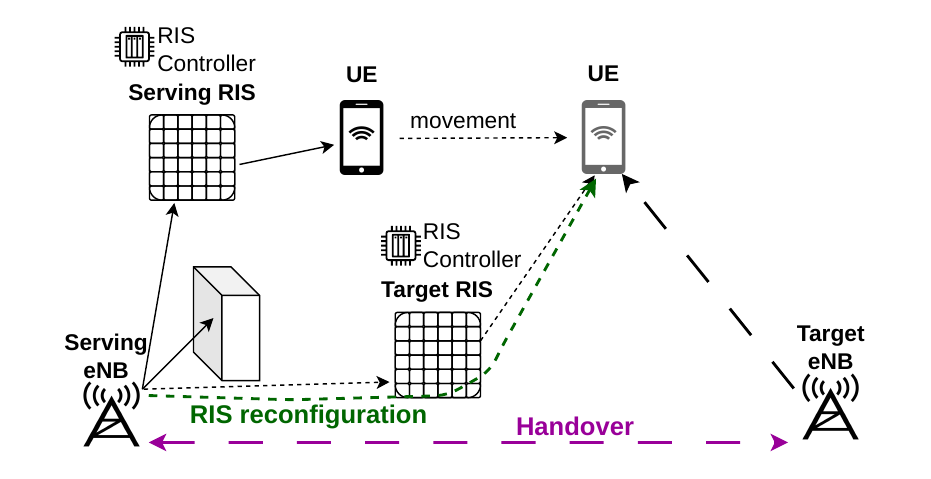}
 \caption{RIS reconfiguration vs. handover (HO) management. }
\label{fig:RIS_HO_RRR}
\end{figure}


\subsection{Modeling statistical properties}
Both RIS reconfigurations and HO are triggered when the wireless path reallocation is necessary due to the inadequate quality and in response to the availability of alternative paths with sufficient quality. The rate of these triggering events is what we refer to as RIS reconfiguration and handover rates. To model these rates statistically, the BS positions in an area are modeled as a homogeneous Poisson Point Process (PPP) with a density $\lambda_{BS}$, akin to \cite{jain2019impact, okaf2021analysis}. We consider that around each BS/eNB there is a set of RISs that can be used to create alternative paths for high frequency signals possibly blocked by obstacles. We assume that RISs are also located in the area around the BS and follow the PPP distribution with a RIS density $\lambda_{RIS}$.  In order to consider  obstacles like walls, buildings, billboard, etc, we assume that the  obstacles are distributed in the area following the PPP distribution with a density $\lambda_B$, and it is assumed to be known \emph{a priori}. Moreover, we assume that an UE can also be self-blocked, e.g., the human body or a physical shape of an moving object can create an obscure angle that blocks signal. The detailed modeling and the impact of those properties on the HO, RIS reconfigurations and signaling rates are presented in Section \ref{sec:sysmodel}.

\subsection{Management and signaling}

To analyze the RIS reconfiguration and HO signaling we based our architecture on LTE and NR mobility  architecture as shown in Figure \ref{fig:RIS_HO_RRR_arch}. Also the related interfaces are based on 3GPP TS 23.501 \cite{etsi2018LTE} and ETSI TS 136.420 \cite{etsi20185g} standards. In this architecture, \textbf{eNB} controls part of the user mobility, in a CONNECTD mode, where it decides if an HO is needed, and then it signals the target eNB for admission control. We propose a new entity called \textbf{RIS Manager (RIS-M)}, which controls the RIS availability and decides the UE admission to a path including a specific RIS.  \textbf{RIS Controller}, on the other hand, controls the phase shift and configurations of RIS devices (whether active or passive), and provides the interface for hardware configurations.  \textbf{Mobility Management Entity (MME)} deals with  the mobility  of users during session setup, IDLE\_MODE, and generally supports subscriber authentication, tracking area management, handovers and roaming.  \textbf{Serving Gateway (SGW)}  is responsible on setting up a user plane and acts as a local anchor for CONNECTED\_MODE user mobility.  SGW sends the user data to UE through eNB.  \textbf{Packet Data Network Gateway (PGW)} connects the LTE to external networks and allocates the IP address to UE.  
 \textbf{ S1}  is an interface between RAN and the LTE core network side. The S1-U interface carries user data between eNB and SGW, and S1-MME interface carries control signaling messages between eNB and MME. S1 can carry HO signaling and HO preparation (HOP) messages between two eNBs, as shown in Fig. \ref{fig:RIS_HO_RRR_arch}. In this case, an HO is called S1-based HO. Finally, \textbf{X2} is an  interface between two eNBs to exchange  control information messages. When X2 interface is used for HO, the HO signaling and HOP messages are transmitted using X2 interface between the source and target eNBs.

Figure \ref{fig:RIS_HO_RRR_sig} illustrates the related L3 signaling for RIS reconfiguration and HO. First,  the serving eNB transmits DL RS to UE through both the serving and neighboring RIS. Afterwards,  UE performs and processes signal strength   measurements, upon which it sends the measurement report (MR) from both paths to the serving eNB. Based on the received measurement report, eNB  makes the decision about RIS reconfigurations and HO, and sends either a RIS reconfiguration request to the RIS-M who makes the RIS admission decision,  or an HO request to the target eNB for an HO admission.\\ 

\begin{figure}[H]
 \centering 
   \includegraphics[scale=0.53]{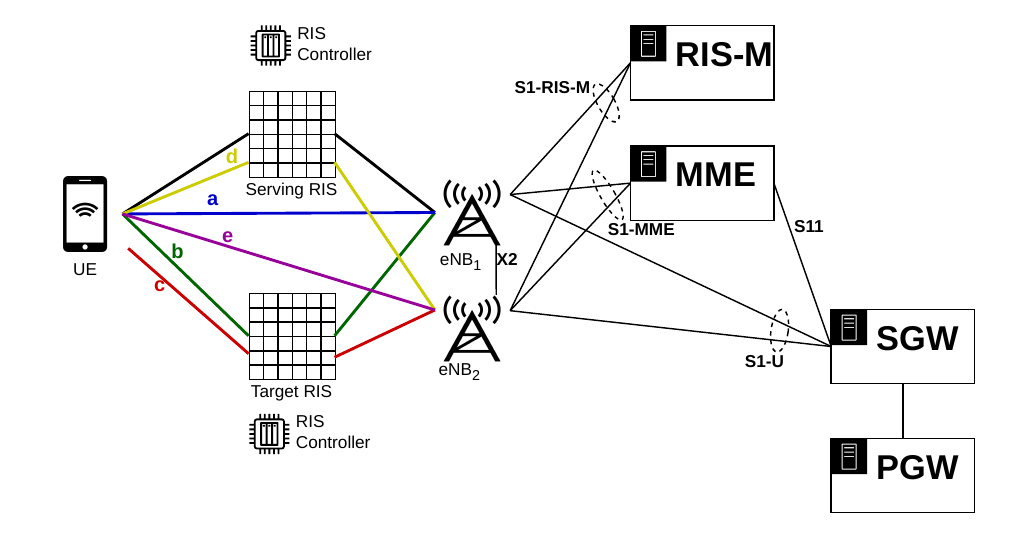}
 \caption{Enhancing the LTE mobility architecture with RIS management entity. MME: Mobility Management Entity, SGW: Serving Gateway, PGW: Packet Data Network Gateway, RIS-M: RIS Manager}
\label{fig:RIS_HO_RRR_arch}
\vspace{-0.5 cm}
\end{figure}

 \begin{figure}[H]
 \centering 
   \includegraphics[scale=0.53]{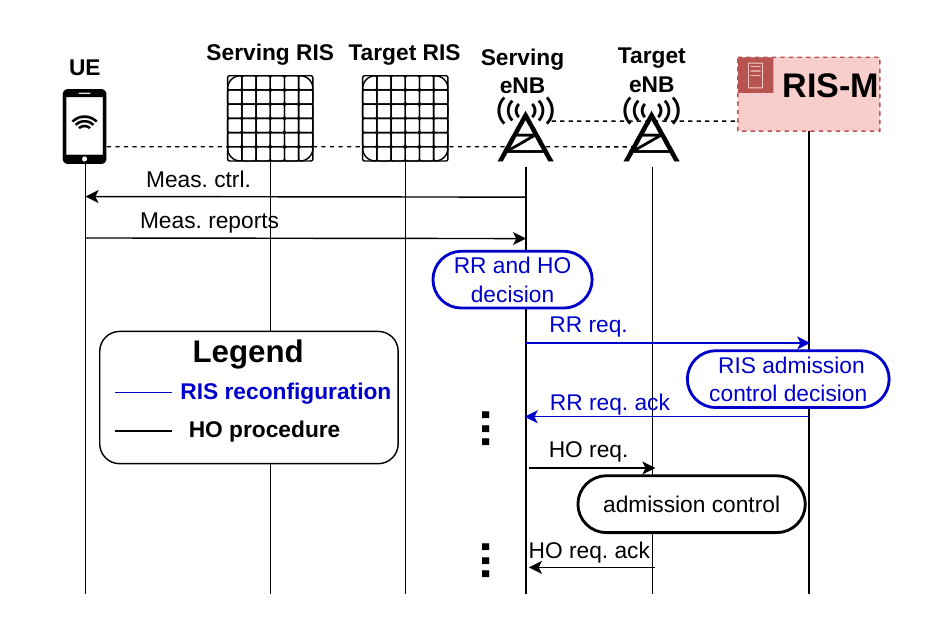}
 \caption{Conceptual proposal for signaling. }
\label{fig:RIS_HO_RRR_sig}
\end{figure}

\section{Analysis}\label{sec:sysmodel}

The goal of this section is to obtain the signaling rate as function of RIS reconfiguration- and HO rates. To this end,  we develop a statistical model of signal quality in a generic THz/mmW mobile network with consideration of mobility and two types of obstacles: static (e.g., walls, buildings, machines) and mobile (e.g., human users, cars, robots). The signaling process in the management plane is triggered when a connection needs to be reconfigured, either due to the user getting closer to a new RIS element than to its operating RIS (RIS reconfiguration), or closer to another eNB (handover).  The notations used in this section are presented in Table \ref{tab:notations}.


\begin{figure}
 \centering 
   \includegraphics[scale=0.4]{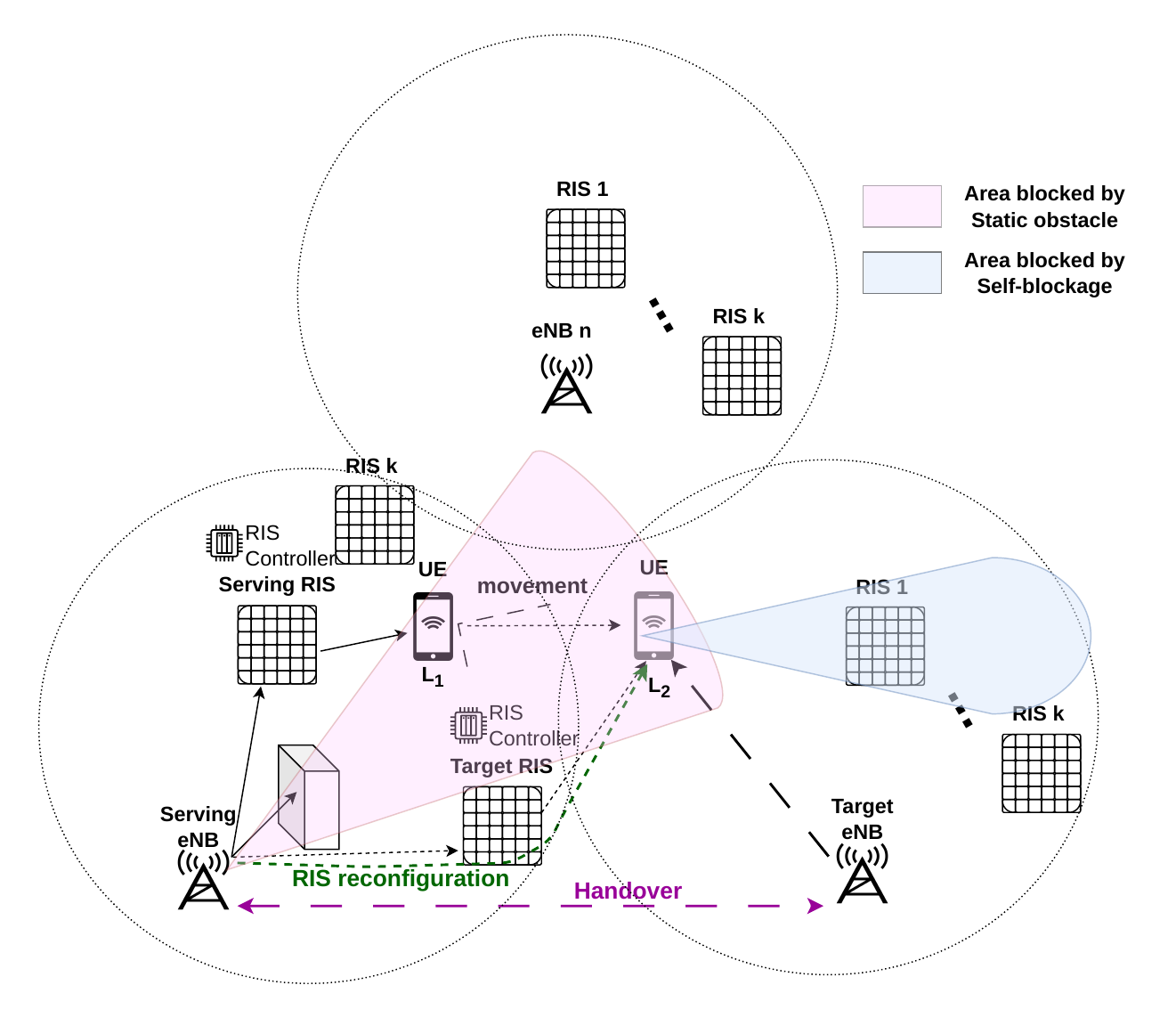}
 \caption{Reference scenario. }
\label{fig:scenario_RIS_eNB}
\vspace{-0.5 cm}
\end{figure}

\begin{table}[h!]
  \begin{center}
    \caption{Notations}
    \label{tab:notations}
    \begin{tabular}{|l|l|}
    \hline 
      \textbf{Notation} & \textbf{Description} \Tstrut\Bstrut \\
      \hline
      $n$ &   Number of eNB \Tstrut \\
      $k$ &   Number of RISs assigned to an eNB  \\
      $N_{SGW}$ & Number of SGW servers\\
      $N_{RISM}$ & Number of RIS-M servers\\
      $R_{LoS}$  & Radius of LoS \\
      $r_{RIS}$ & UE-serving RIS distance before movement \\
      $R_{RIS}$ &  UE-serving RIS distance after movement\\
      $r_{eNB}$ &  UE-serving eNB distance before movement \\
      $R_{eNB}$ & UE-serving eNB distance after movement\\
      $\lambda_{RIS}$ & RISs density    \\ 
      $\lambda_{eNB}$ &  eNBs density  \\ 
      $\lambda_{B}$ & Static obstacle density    \\ 
        d$_U$ & UE movement speed \\ 
        $\lambda_k$ & The mean session arrival  rate  for  k$^{th}$ SGW \\
        $\lambda^{(s)}$ & The mean session arrival  rate  for  s$^{th}$ RIS-M  \\
      $\xi$ &  Angle of UE movement \\
      $\theta$ & Self-blockage angle of UE\\
      $RIS^{i} $ & Indicator of  RIS $i$  being blocked by   obstacles  \\
      $eNB^{j}$ & Indicator of  eNB $j$   being blocked by   obstacles  \\
      $RIS^{\text{self}}$ & Indicator of RIS   self-blockage\\
      $eNB^{\text{self}}$ & Indicator of eNB   self-blockage\\
      $RR$ & RIS reconfiguration\\
      $HO$ & Handover\\
      $p_a$ & Session success rate\\
      $\gamma$ & Signaling rate \Bstrut \\
      \hline
    \end{tabular}
  \end{center}
\end{table}

\subsection{Reference Scenario and Assumptions}
The reference scenario for the analysis is shown in Fig. \ref{fig:scenario_RIS_eNB}. We show a set of eNBs, RISs, a mobile UE,  obstacles and the area of self-blockage.  The following assumptions are made:
\subsubsection{Topology} eNBs are modeled as a homogeneous PPP with eNB density $\lambda_{eNB}$. Each eNB can serve a UE placed in an area of origin $o_{eNB}$ and a radius $d_{eNB}$, which means that the number of eNBs in a disk area follows a PPP distribution. Given a specific number of eNBs in a disk,  we assume that the eNBs locations follow a uniform distribution, thus the distance between any pair of UE and eNB can be obtained as:
\begin{equation}
f_{R_{LoS}}(r) = \frac{2r}{R_{LoS}^{2}} 
\end{equation}
, where $r$ is UE-eNB distance, and $R_{LoS}$ is the radius of LoS. 

We consider that around every eNB, a set of RIS devices are placed in static structures to be used for alternative path creation. For the simplification of the model, we assume that a RIS can only be used by one eNB. We assume that RISs are  distributed in $\mathbb{R}^2$ space also according to the homogeneous Poisson Point Process (PPP) with intensity $\lambda_{\text{RIS}}$, akin to related assumptions  applied to the base stations in \cite{jain2019impact}. Thus, the number of RISs in an area $A$ is a random variable that follows Poisson distribution with mean value $A\lambda_{\text{RIS}}$.

\subsubsection{Obstacle model}  Obstacles are modeled using random shape theory \cite{jain2019impact}, where they have length $l$ and width $w$. When an obstacle has a known position, the blocked area due to the obstacle can be calculated as closed form. When the position is not known \emph{a priori}, the probability  that a UE-RIS or UE-eNB link  is blocked by an obstacle is given by:
\begin{equation}\begin{split}
P(RIS^{i} | r)= &1-e^{-(\beta r + \beta_0)},\\ P(eNB^{j} | r_{eNB})= & 1-e^{-(\beta r_{eNB} + \beta_0)}
\end{split}
\label{eq:staticblockage}
\end{equation}
where $\beta=\frac{2}{\pi}\lambda_B (\mathbb{E}[l] + \mathbb{E}[w])$ and $\beta_0= \lambda_B \mathbb{E}[l]  \mathbb{E}[w]$, with $\lambda_B$ is the density of  obstacles per Km$^2$, $\mathbb{E}[l] $ and $\mathbb{E}[w]$ are the average length and width of  obstacles (e.g., buildings). 

\subsubsection{Self-blockage model}  UE can also block a set of RISs and eNBs, referred to as self-blockage. The zone blocked by the user is a section of a disc with a blocking angle $\theta$.  When we consider a known position and direction of the user, we can deduce the blocked area by the user. The probability that a randomly selected RIS or eNB is blocked is given by 
\begin{equation}
P(RIS^{\text{self}}) = P(eNB^{\text{self}})= \frac{\theta}{2 \pi}
\label{eq:selfblockage}
\end{equation}

\subsubsection{Mobility} As shown in  Fig. \ref{fig:scenario_RIS_eNB}, UE located in L$_1$ is moving to location L$_2$ with a speed of $d_U$ per unit of time, assuming that L$_1$ and L$_2$ are both located in the shaded area where the line of Sight (LoS) from the serving eNB is blocked, and the communication is only possible through RISs. When the user moves to the non-blocked area, and there is a LoS signal from the serving eNB to UE, no RR is required. Here, two cases can be identified: either the serving eNB sends the signal directly to UE, or there is handover to a target eNB. At location L$_1$,  UE is associated to serving RIS with distance $d_{UsR}=r$, and the distance to the serving eNB is $d_{UeNB}=r_{eNB}$. After the movement with distance $d_U$ at time unit and an angle $\xi$, the distance at location L$_2$ becomes equal to $R$, $d_{UtR}=R$, and the distance to the serving eNB is $d_{UeNB}=R_{eNB}$.  We assume that the speed and angle of movement are randomly distributed.  

\subsubsection{RIS reconfigurations} We assume that when there is no direct link between the transmitter $B$ and the receiver $U$ they are connected over a RIS. We furthermore assume that all RISs in the system have the same size and $U$ is always associated to the closest RIS with a direct link. The obstacle blocking the Line-of-Sight (LoS) between $B$ and $U$ is assumed to have predefined location creating blocked area shaded due to  obstacles, as shown in  Fig. \ref{fig:RR_staticBlocking}. We assume that due to the obstacles the signal is attenuated below the SNR threshold, which requires reconfigurations. We consider a disc of a center which is the UE location $o_{s}$ and a radius $r$ as the distance between the serving RIS and the UE before mobility. The UE is considered mobile and moving to a new location where the distance between the UE and serving RIS is $R$. We assume that a RIS reconfiguration  process would happen if there is another RIS in the disc with origin $o_t$ (target UE position) and radius $R_t$ without being blocked by obstacles, i.e., $R_{t} < R$. 
\subsubsection{Handovers (HO)} We consider a disc with the UE location $o_{s}$ as center and a radius $r_{eNB}$ as the distance between the serving eNB and the UE before mobility.   UE moves to a new location with the distance $R_{eNB}$ between the UE and serving eNB. We assume that a handover happens if there is another eNB in the disc $o_t$ (target UE position) and radius $R_{eNB}$ without being blocked by obstacles, i.e $R_{teNB} < R_{seNB}$. In reality, HO  and RR events can be dependent, for instance in the situations when UE mobility triggers a handover and a RR at the same time. For simplification of the analysis and the signaling, we treat these two events independently, and assume that HO events do not include RR. 

\subsubsection{Signaling} The signaling traffic towards the system generally considers the user authentication UEs, and in case of mobility it needs to include the signaling due to HO  and RR process. We assume that the new UE session arrival process from the  k$^{th}$ SGW is Poissonian with a mean rate of $\lambda_{k}$. For the mobility scenario with RR, we assume that that the new UE session arrival process triggering  the  s$^{th}$ RIS-M is also Poissonian with a mean rate of $\lambda^{(s)}$.

\subsection{Handover (HO) rate }
A handover happens when at least one eNB is closer to the UE than the serving eNB.
\begin{thm}\label{thm:HO}
The HO rate can be expressed as the sum over all SGW of the product of the session arrival rate $ \lambda_{k}$ for the k$^{th}$ SGW and the  probability of HO when a UE moves with a speed $d_{U}$ and a angle of movement $\xi_{eNB}$, given as follows:
\begin{equation}\label{eq:HOR_general}\begin{split}
\mathbb{E}[HO] = \sum_{k=1}^{N_{SGW}}     P(HO| d_U,\xi_{eNB}) \lambda_{k}, \text{ and } \\P(HO| d_U,\xi_{eNB})  =  1 - e^{   -P(Z)   \lambda_{eNB} \pi R_{eNB}^{2}  } 
\end{split}
\end{equation} 
where $R_{eNB}$ can be computed as shown in Appendix \ref{appedix:proof1} (see Eq. (\ref{eq:R_from_r})) in relation to the speed of the UE $d_{U}$, the angle of movement $\xi_{eNB}$ and the distance between the UE and the serving eNB $r_{eNB}$ as follows:
\begin{equation}
R_{eNB}= \sqrt{r_{eNB}^2 + d_{U}^2 - 2r_{eNB} d_{U}\cos(\pi - \xi_{eNB})}
\end{equation}
and   the probability $P(Z)$ that an eNB   is not blocked by  obstacles or self-blockage, given as:
\begin{equation}\begin{split}
P(Z) = (1-\frac{\theta}{2\pi})  \frac{2 e^{-\beta_0}}{\beta^{2}R_{LoS}^{2}}\left[ 1- (1+ \beta R_{LoS} ) e^{-\beta R_{LoS}} \right]
\end{split}\label{eq:PZ}
\end{equation} 
\end{thm}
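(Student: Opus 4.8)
The plan is to decompose the handover rate into a per-SGW product of a session arrival rate and a handover probability, then compute that probability via a void-probability (empty-disc) argument on the thinned Poisson point process of eNBs. First I would justify the outer sum: by the assumption that new UE sessions from the $k^{\text{th}}$ SGW form an independent Poisson process with rate $\lambda_k$, and that each session independently triggers a handover with probability $P(HO\mid d_U,\xi_{eNB})$, superposition and thinning of Poisson processes give the aggregate handover rate $\sum_{k=1}^{N_{SGW}} P(HO\mid d_U,\xi_{eNB})\,\lambda_k$. This reduces the theorem to establishing the closed form for $P(HO\mid d_U,\xi_{eNB})$ and the auxiliary formulas for $R_{eNB}$ and $P(Z)$.

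Next I would derive $R_{eNB}$ purely geometrically: the UE starts at distance $r_{eNB}$ from the serving eNB, moves a distance $d_U$ along a direction making angle $\xi_{eNB}$ with the UE--eNB line, so the triangle with sides $r_{eNB}$, $d_U$ and included angle $\pi-\xi_{eNB}$ yields, by the law of cosines, $R_{eNB}=\sqrt{r_{eNB}^2+d_U^2-2r_{eNB}d_U\cos(\pi-\xi_{eNB})}$. (I would simply cite Eq.~(\ref{eq:R_from_r}) in Appendix~\ref{appedix:proof1} for the detailed bookkeeping of the mobility geometry.) Then for the handover probability I would use the event characterization already given: a handover occurs iff there exists at least one eNB, not blocked by obstacles or self-blockage, lying inside the disc of radius $R_{eNB}$ centered at the UE's new position. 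Since the eNBs form a homogeneous PPP with intensity $\lambda_{eNB}$, and each eNB is independently ``usable'' (unblocked) with probability $P(Z)$, the usable eNBs form a thinned PPP of intensity $P(Z)\lambda_{eNB}$. The probability that this disc is empty of usable eNBs is the Poisson void probability $e^{-P(Z)\lambda_{eNB}\pi R_{eNB}^2}$, so $P(HO\mid d_U,\xi_{eNB})=1-e^{-P(Z)\lambda_{eNB}\pi R_{eNB}^2}$, as claimed.

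The last ingredient is $P(Z)$, the probability a given eNB is neither self-blocked nor obstacle-blocked. Self-blockage and obstacle blockage are treated as independent, so $P(Z)=(1-\tfrac{\theta}{2\pi})\cdot P(\text{no obstacle blockage})$, using Eq.~(\ref{eq:selfblockage}) for the first factor. For the second factor I would condition on the UE--eNB distance, which is distributed with density $f_{R_{LoS}}(r)=2r/R_{LoS}^2$ on $[0,R_{LoS}]$, and use Eq.~(\ref{eq:staticblockage}), $P(\overline{eNB^j}\mid r)=e^{-(\beta r+\beta_0)}$, giving
\begin{equation}
P(\text{no obstacle blockage})=\int_0^{R_{LoS}} e^{-(\beta r+\beta_0)}\,\frac{2r}{R_{LoS}^2}\,dr = \frac{2e^{-\beta_0}}{\beta^2 R_{LoS}^2}\Bigl[1-(1+\beta R_{LoS})e^{-\beta R_{LoS}}\Bigr],
\end{equation}
where the integral is the standard $\int_0^a r e^{-\beta r}dr = \beta^{-2}[1-(1+\beta a)e^{-\beta a}]$. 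Multiplying the two factors gives Eq.~(\ref{eq:PZ}).

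The main obstacle, and the step requiring the most care, is the blockage-independence argument underpinning the thinning: strictly, the blockage events of nearby links are spatially correlated (two eNBs close together tend to be blocked by the same obstacle), and the paper's own related-work discussion flags this. The proof as planned relies on the modeling assumption that each eNB is independently blocked with the marginal probability from Eq.~(\ref{eq:staticblockage}), which makes the thinned process exactly Poisson; I would state this independence assumption explicitly as the point where the closed form becomes exact rather than approximate. A secondary subtlety is whether to average $P(Z)$ over the link-distance distribution (as above, yielding a distance-independent ``usability'' probability for a randomly located eNB) versus keeping it distance-dependent inside the void-probability integral; I would adopt the former, consistent with how $P(Z)$ is written in the statement, and note that this is the natural choice once eNB positions are taken uniform within the LoS disc.
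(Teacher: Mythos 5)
Your proposal is correct and follows essentially the same route as the paper: the same factorization of $P(Z)$ into the self-blockage factor times the distance-averaged obstacle-survival integral, and the same conclusion that the unblocked eNBs in the disc of radius $R_{eNB}$ are Poisson with mean $P(Z)\lambda_{eNB}\pi R_{eNB}^2$, so that $P(HO)=1-e^{-P(Z)\lambda_{eNB}\pi R_{eNB}^2}$. The only difference is presentational: you invoke the standard PPP thinning/void-probability theorem directly, whereas the paper re-derives it by conditioning on $n$ eNBs in the disc, applying a binomial for the number unblocked, and summing against the Poisson distribution of $n$ -- the same argument spelled out.
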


\begin{proof} 
See Appendix \ref{appedix:proof_th}.
\end{proof}

\subsection{RIS Reconfiguration (RR) rate}
To obtain the RIS Reconfiguration  (RR) rate, we consider two cases: (i)  obstacles known \emph{a priori}, and ii)  obstacles unknown. \\
\subsubsection{Obstacles locations known \emph{a priori}}

\begin{figure}
 \centering 
   \includegraphics[scale=0.25]{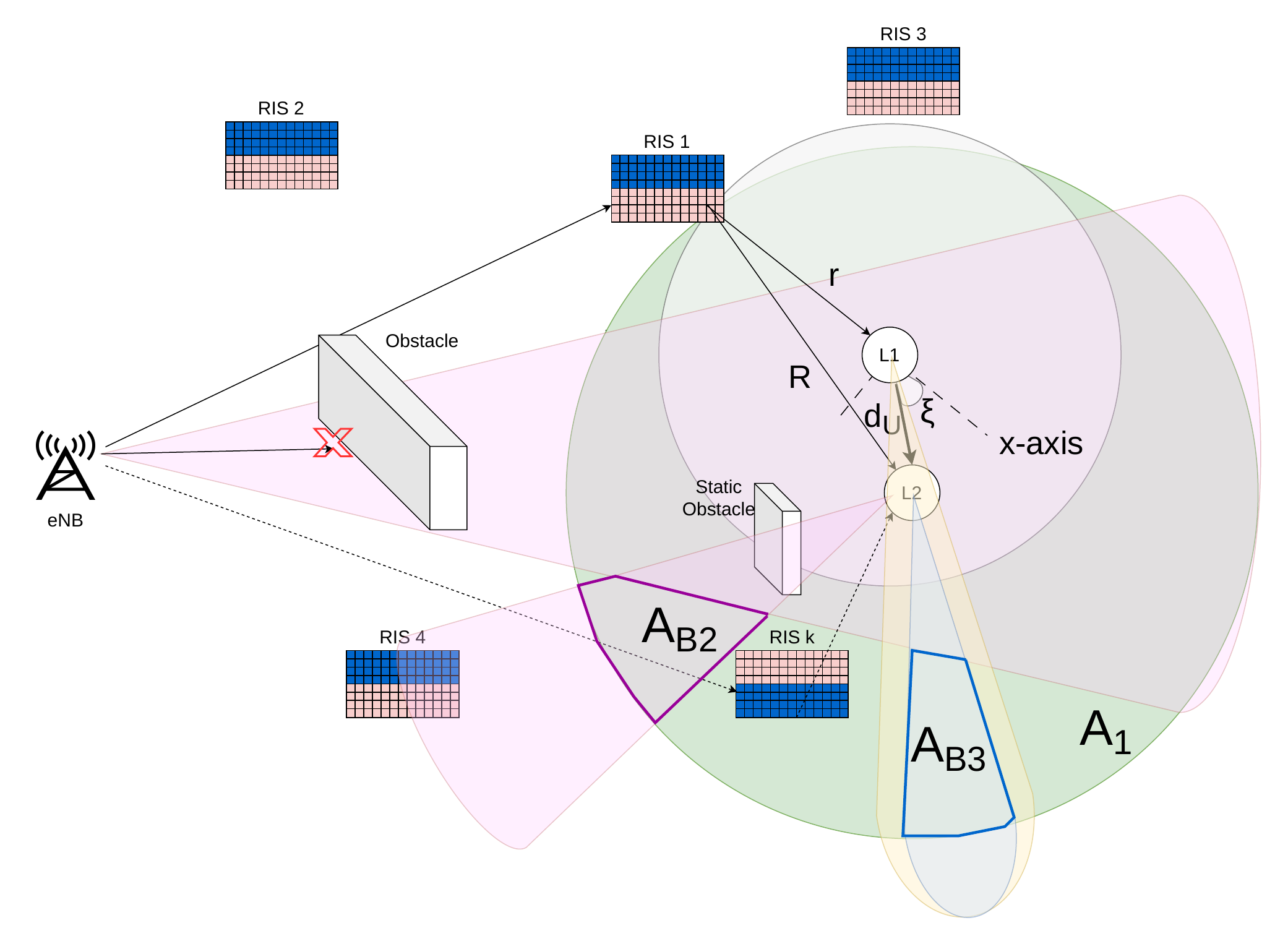}
 \caption{Scenario with mobile UEs, RISs,  obstacles and self-blockage. }
\label{fig:RR_staticBlocking}
\vspace{-0.5 cm}
\end{figure}
The RR rate can be expressed as the sum over all RIS-M of the product of RR probability and the mean session arrival rate for s$^{th}$ RIS-M. 
\begin{equation}\label{eq:RRP_general}
\mathbb{E}[RR] =  \sum_{s=1}^{N_{RISM}}  P(RR) \lambda^{(s)} 
\end{equation} 
The RR probability can be expressed as the probability of RIS reconfiguration when UE  $U$ moves from L$_1$ to L$_2$ as follows:
\begin{equation}\label{eq:RRR_general}
 P(RR)=1-e^{-A\lambda_{\text{RIS}}}
\end{equation} 
where $A$ is the non-blocked excess area in Fig. \ref{fig:RR_staticBlocking}. The new closer RIS  cannot be in the circle of radius $r$ as we assumed that RIS 1 is the closest to user in location L$_1$, also the selected  RIS candidate cannot be in the  blocked area. As a result, the non-blocked excess area  is the area inside the circle with radius $R$ is deprived of the intersection between the circle with radius $r$ and the blocked area. \\

\begin{prop}\label{prop:RRR_no_obstacle}
The RIS reconfiguration probability for a mobile UE with a speed of $d_U$ at unit of time at an angle $\xi$ of a RIS-assisted system in presence of obstacle between the RIS and the UE, where the distribution of RISs follows the PPP distribution with density $\lambda_{RIS}$ is given as follows:
\begin{equation}\label{eq:RRR_no_extra_obstacle}
P(RR| d_U,\xi)=1-e^{-A_1 \lambda_{\text{RIS}}}
\end{equation} 
where $A_1$ can be determined in terms of the angle resulted from the obstacle, the distance between the base station $B$ and the UE  $U$, the speed $d_U$ at unit of time and the angle $\xi$.
\end{prop}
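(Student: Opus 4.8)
The proposition combines a standard void-probability argument for a Poisson point process with a geometric computation of the relevant area. I would proceed in three stages: (1) show that the reconfiguration event coincides with the event that a certain region contains at least one RIS, so that the $1-e^{-A_1\lambda_{\text{RIS}}}$ form follows immediately from the PPP; (2) locate the post-mobility UE position and express $R$ via the law of cosines; (3) evaluate the region's area $A_1$ by inclusion--exclusion over three elementary sets.

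\textbf{Stage 1: probabilistic reduction.} By the stated model, after the move from $L_1$ to $L_2$ a RIS reconfiguration is triggered precisely when the disc $\mathcal D_t$ centred at $o_t$ with radius $R$ contains at least one RIS that (i) does not lie in the disc $\mathcal D_s$ centred at $o_s$ with radius $r$ --- and no candidate RIS lies there, since RIS$_1$ at distance $r$ was the closest RIS to $U$ at $L_1$ --- and (ii) does not lie in the region $\mathcal B$ shadowed by the obstacle between $B$ and $U$, since such a RIS would itself be blocked. Thus the reconfiguration event equals the event that $\mathcal A_1 := \mathcal D_t\setminus(\mathcal D_s\cup\mathcal B)$ contains a RIS. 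As the RISs form a homogeneous PPP of intensity $\lambda_{\text{RIS}}$, the number of RISs in $\mathcal A_1$ is Poisson with mean $\lambda_{\text{RIS}}\,|\mathcal A_1|$, hence $P(RR\mid d_U,\xi)=1-e^{-\lambda_{\text{RIS}}A_1}$ with $A_1=|\mathcal A_1|$; only the evaluation of $A_1$ remains.

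\textbf{Stage 2: post-mobility geometry.} In the triangle formed by $o_s$, $o_t$ and the serving RIS, the two sides incident to $o_s$ have lengths $r$ (towards the RIS) and $d_U$ (towards $o_t$), and the enclosed angle at $o_s$ is $\pi-\xi$; the law of cosines gives $R=\sqrt{r^2+d_U^2-2rd_U\cos(\pi-\xi)}$, exactly as for $R_{eNB}$ in the proof of Theorem~\ref{thm:HO}, and $\|o_s-o_t\|=d_U$.

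\textbf{Stage 3: the excess area, and the main difficulty.} Inclusion--exclusion gives $A_1=\pi R^2-|\mathcal D_s\cap\mathcal D_t|-|\mathcal B\cap\mathcal D_t|+|\mathcal D_s\cap\mathcal D_t\cap\mathcal B|$. The lens $|\mathcal D_s\cap\mathcal D_t|$ is the textbook intersection area of two discs of radii $r$ and $R$ with centre distance $d_U$, a closed-form combination of circular segments in $r$, $R$ and $d_U$ (hence in $r$, $d_U$, $\xi$). I would model the shadow $\mathcal B$ as a circular wedge of aperture equal to the angle $\omega$ subtended by the obstacle --- the ``angle resulting from the obstacle'' --- with apex and orientation fixed by the $B$--obstacle--$U$ geometry (and thus by the $B$--$U$ distance); intersecting this wedge with $\mathcal D_t$, and with $\mathcal D_s$ for the triple term, produces sectors and circular segments whose areas are again elementary functions of $\omega$, $r$, $R$, $d_U$ and the $B$--$U$ distance. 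Collecting the pieces expresses $A_1$ in closed form through the obstacle angle, the $B$--$U$ distance, $d_U$ and $\xi$. The genuinely delicate point is this last step: one must pin down the mutual position of $B$, $\mathcal B$, $\mathcal D_s$ and $\mathcal D_t$ and decide whether $\mathcal B\cap\mathcal D_t$ (and the triple intersection) is a full sector or is clipped by a disc boundary --- the case analysis where a simplifying approximation (e.g.\ identifying the blocked part of $\mathcal D_t$ with a sector of $\mathcal D_t$ of aperture $\omega$, or linearising the shadow boundary near $U$) is most likely introduced; Stages 1 and 2 are routine.
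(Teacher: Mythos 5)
Your proposal follows essentially the same route as the paper's proof: the PPP void probability on the visible excess area yields the form $P(RR\mid d_U,\xi)=1-e^{-\lambda_{\text{RIS}}A_1}$, the radius $R$ is obtained from the law of cosines exactly as in Eq.~(\ref{eq:R_from_r}), and $A_1$ is then assembled from triangles, circular sectors and segments with a case analysis on how the shadow cast from $B$ by the obstacle intersects the two discs (Figs.~\ref{fig:areas11} and \ref{fig:areas1}). The only bookkeeping difference is that the paper writes $A_1=A_E-A_B$ with $A_B=A_{S_1}-A_{S_2}$, i.e.\ it subtracts the blocked portion of the disc of radius $r$ rather than your triple-intersection term, which coincides with your inclusion--exclusion in the configurations considered, where the blocked part of the small disc lies inside the disc of radius $R$.
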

\begin{proof}
See Appendix \ref{appedix:proof1}, where all the cases of finding $A_1$  as function of the  blocked area are solved. 
\end{proof}

\textbf{RR with  obstacle to the RIS:}
In Fig. \ref{fig:RR_staticBlocking},  an obstacle can exist in the visible excess area defined previously as $A_1$. A static object outside of the area defined by the circle of origin L$_2$ and radius $R$ is not considered because a RR happens only when a RIS is closer to the user than the operating RIS, and those RISs can be blocked by an object between the user and the RIS. The  obstacle  creates a new blocked area denoted as $A_{B_2}$ and might decrease the  visible excess area $A_1$ to obtain a visible excess area $A_2$. Thus $A_2$ is given as follows:
\begin{equation}
A_2=A_1 - A_{B_2}
\end{equation} 
where $ A_{B_2}$ can be computed numerically based on the location and the size of the obstacle, which are assumed to be known for a static object in an indoor or outdoor environment. \\
Thus, the RR probability for a mobile UE with a speed of $d_U$ at unit of time at an angle $\xi$  in presence of obstacle between the RIS and the UE, and an obstacle in the space, can be calculated using  Eq. (\ref{eq:RRR_general}) by replacing $A$ with the obtained area $A_2$, i.e., 
\begin{equation}\label{eq:static-blockage}
P(RR| d_U,\xi)=1-e^{-(A_1- A_{B_2})\lambda_{\text{RIS}}}
\end{equation}

\textit{\textbf{RR with self blockage: }} 
 Self-blockage happens when the body of the user creates a blocked zone as a sector of a circle with origin the location of the user, and a predefined blocked angle $\theta$ depending of the user shape. In Fig. \ref{fig:areas2}, we illustrate different scenarios of self-blockage based on the blockage direction and angle. We assume that any high-frequency signal coming from a RIS or relay in the blocked area is totally blocked by the body of the mobile user, thus for RR, only devices, RIS, in the visible area are considered, which are marked with triangles  in the figure for different cases. In case 1, the self-blocked area is entirely inside the invisible area blocked by the obstacle placed between the base station $B$ nd the user, which means that that the  visible excess area $A_1$ remains unchanged, and RR is given by Eq. (\ref{eq:RRR_no_extra_obstacle}). This case can be derived from self-blocked angle, the location of the mobile user L$_2$, the speed $d_U$ per unit of time, and the angle $\xi$, and the blocked area. In cases 2,3 and 4, part of the visible excess area $A_1$ is covered by the self-blockage. The difference appears in the direction of the self-blocked  in terms of link between the user and the operating RIS, which affects the calculation of the extra blocked area denoted as $A_{B_3}$. In this work, the blocked area $A_{B_3}$ can be calculated numerically, because all the geometrical information about the obstacle are given in advance; the self-blocked angle $\theta$ and the direction of movement defined by the angle $\xi$. Thus
the  visible excess area $A_3$ and the RR rate for a mobile user with a speed of $d_U$ at unit of time at an angle $\xi$ and a self-blocking angle $\theta$,   are given as follows:
\begin{equation}\label{eq:self-blockage}
A_3=A_1 - A_{B_3}, \;\;\;\;P(RR| d_U,\xi,  \theta)=1-e^{-A_3 \lambda_{\text{RIS}}}
\end{equation} 
\begin{figure*}
 \centering 
   \includegraphics[scale=0.12]{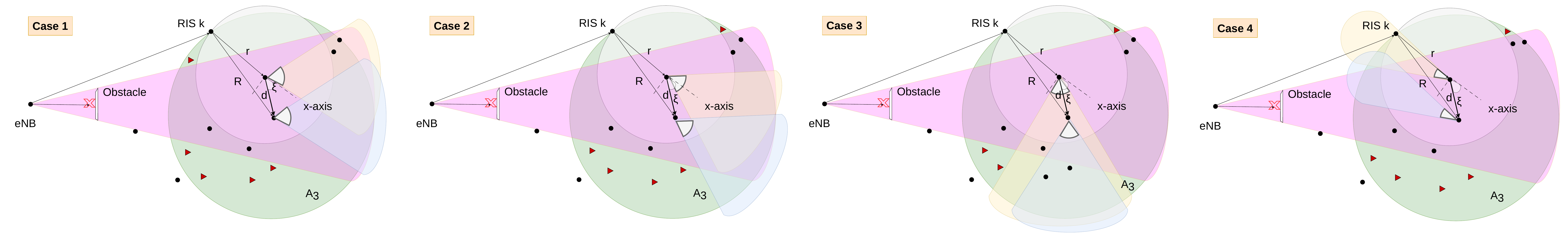}
 \caption{Blocked area situations based on self-blockage direction. }
\label{fig:areas2}
\vspace{-0.5 cm}
\end{figure*}
 

 \textbf{RR with random speed and direction mobility: } Let us consider a user that moves at a randomly distributed speed $d_U\in [d_U^{min},d_U^{max}]$ per unit of time at a determined angle $\xi$. Let $D$ be a random variable that defines the speed of the user per unit of time. If we denote the pdf of $D$ as $f_D (d_U)$, the RR probability  can be obtained as follows:
 \begin{equation}
P(RR | \xi)=1-\int_{d_U^{min}}^{d_U^{max}} f_{D}(d_U) e^{-\lambda_{\text{RIS}} A}\partial d_U
\end{equation}
In general, the speed of the mobile user can follow different distributions which depends on user behavior. Let us assume that the user speed is uniformly distributed in the interval $d_U\in [d_U^{min},d_U^{max}]$. The pdf of $D$ and the RR probability can be obtained as follows:
\begin{equation}\begin{split}
f_D (d_U)=  &\frac{1}{d_U^{max} - d_U^{min}}\\ \Rightarrow  P(RR | \xi)= &1-  \frac{1}{d_U^{max} - d_U^{min}} \int_{d_U^{min}}^{d_U^{max}}  e^{-\lambda_{\text{RIS}} A}\partial d_U
\end{split}
\end{equation}

\textit{\textbf{RR with mobility at random angle: }}
This case considers a random waypoint model, where a user can move in a  speed $d_U$ per unit of time at a randomly distributed angle  $\xi\in [\xi^{min},\xi^{max}]$. Let $X$ be a random variable that defines the angle of movement  of the user, and $f_X (\xi)$ is the pdf of $X$.  Therefore the RR at a random angle $\xi$ can be obtained as follows:
 \begin{equation}
P(RR | d_U )=1-\int_{\xi^{min}}^{\xi^{max}} f_{X}(\xi) e^{-\lambda_{\text{RIS}} A}\partial \xi
\end{equation}
A particular case can  simplify the RR expression is when the angle $\xi$ is uniformally distributed on $[0, \frac{\pi}{2} ]$,  given as follows:
\begin{equation}
 P(RR | \xi)=1-  \frac{2}{\pi} \int_{0}^{\frac{\pi}{2}}  e^{-\lambda_{\text{RIS}} A}\partial \xi
\end{equation}

\textbf{RR with mobility at random speed and random angle (random waypoint and random direction): }
When the user is moving at randomly distributed speed $d_U\in [d_U^{min},d_U^{max}]$  per unit of time at a randomly distributed angle  $\xi\in [\xi^{min},\xi^{max}]$,  RR can be calculated as:
  \begin{equation}
P(RR)=1-\int_{\xi^{min}}^{\xi^{max}} \int_{d_U^{min}}^{d_U^{max}} f_{D}(d_U) f_{X}(\xi) e^{-\lambda_{\text{RIS}} A}\partial d_U \partial \xi
\end{equation}
When both the speed $d_U$ per unit of time the angle $\xi$ are  uniformally distributed on $[d_U^{min},d_U^{max}]$ and $[0, \frac{\pi}{2} ]$,  respectively,  RR calculation can be simplified as follows:
 \begin{equation}
P(RR)=1-\frac{2}{\pi (d_U^{max} - d_U^{min})} \int_{0}^{\frac{\pi}{2}} \int_{d_U^{min}}^{d_U^{max}} e^{-\lambda_{\text{RIS}} A}\partial d_U \partial \xi
\end{equation}

\subsubsection{Obstacles locations are not known \emph{a priori}} In this case, the   visible excess area to the mobile UE  cannot be expressed deterministically. We consider that RR occurs when at least one RIS is located in the visible excess area to the mobile UE, and RISs are distributed in $\mathbb{R}^2$ with density $\lambda_{RIS}$.  The RR rate can then be presented as the sum over all SGW of the product of session arrival rate $\lambda_{k}$ and  the probability of RR when a UE moves with a speed $d_U$ and angle of movement $\xi$. Using Theorem \ref{thm:HO} and replacing $\lambda_{eNB}$, $R_{eNB}$, and $ \xi_{eNB}$ by $\lambda_{RIS}$,   $R_{RIS}$, and $\xi$, we obtain the RR rate as follows:
\begin{equation}\label{eq:RRR_general_last}\begin{split}
\mathbb{E}[RR] =& \sum_{s=1}^{N_{RISM}}P(RR| d_U,\xi) \lambda^{(s)}  \\=&  \sum_{s=1}^{N_{RISM}} \lambda^{(s)}  \left( 1 - e^{   -P(Z)   \lambda_{RIS} \pi R_{RIS}^{2}  } \right)
\end{split}
\end{equation} 
where $P(Z)$ is defined in Eq. (\ref{eq:PZ}) and $R_{RIS}$ is given as follows:
\begin{equation}
R_{RIS}= \sqrt{r_{RIS}^2 + d_{U}^2 - 2r_{RIS} d_{U}\cos(\pi - \xi)}
\end{equation}
The proof of this expression can be found akin to Theorem \ref{thm:HO}.

\subsection{Signaling Rate} 

The expected value of the signaling rate denoted as $\gamma$ is obtained as the aggregate rate of the expected values of basic signaling $S_b$ (e.g., keep alive during the user session lifetime) and of the additional signaling  (overhead, $S_o$) which incorporates the expected values of HO and RR rates, denoted as $\mathbb{E}[Sb]$, $\mathbb{E}[So]$, $\mathbb{E}[HO]$ and $\mathbb{E}[RR]$, respectively, i.e., 
\begin{equation}\label{eq:sig_rate}
\mathbb{E}[\gamma]= \mathbb{E}[Sb] + p_a \mathbb{E}[So] 
\end{equation}
where $p_a$ is the session success rate (i.e., which is always equal 1 if the session does not fail).\\

For every session, and if the UE does not change its serving eNB  or RIS during the session, we only have basic signaling. The expected value of the basic signaling rate can be seen as a compound Poisson process of the random number of messages
sent during the mobile gateway holding times and RIS-manager holding times, i.e., 
\begin{equation}\label{eq:Au_fix}
\mathbb{E}[Sb] = \sum_{k=1}^{N_{SGW}}\lambda_{k}+ \sum_{s=1}^{N_{RISM}}\lambda^{(s)}
\end{equation}

We now extend Eq.(\ref{eq:Au_fix}) to include the effect of mobility. Two events can occur during a session holding time, i.e.,  HO and RR, which in every networking management system are to be counted as additional signaling (overhead). Both processes occur at the corresponding rates of RR and HO. Staring with the RR, we design a series of signaling messages according to the current HO standard, including RR requests, admission control, etc. For simplicity, we assume this process never to fail, and be deterministic in its performance. Similarly, the HO signaling (here designed according to standard \cite{NR2019}) includes a number of signaling messages that engage SGW and MME.  Also all of these messages occur at the rate of HO. Thus, the expected value of the resulting signaling rate of this overhead can be defined as:
\begin{equation}\label{eq:Au_mobile}
\mathbb{E}[So] = \sum_{k=1}^{N_{SGW}}  \mathbb{E}[HO]    + \sum_{s=1}^{N_{RISM}}  \mathbb{E}[RR]
\end{equation}

By substituting  Eq. (\ref{eq:Au_fix}) and Eq. (\ref{eq:Au_mobile}) in Eq. (\ref{eq:sig_rate}), we obtain the expression of signaling rate as follows:
\begin{equation}\label{eq:sig_rate_final}\begin{split}
\mathbb{E}[\gamma]= & \left[ \sum_{k=1}^{N_{SGW}} \left(1 + P[HO]  \right) \lambda_{k}  +   \sum_{s=1}^{N_{RISM}}  (1+P[RR])\lambda^{(s)} \right] \\
 & \times \left(1  + p_{a} \right)\\
\end{split}
\end{equation}

\subsection{Discussion on signaling protocol} \label{sec:signaling}
\begin{figure*}
 \centering 
   \includegraphics[scale=0.6]{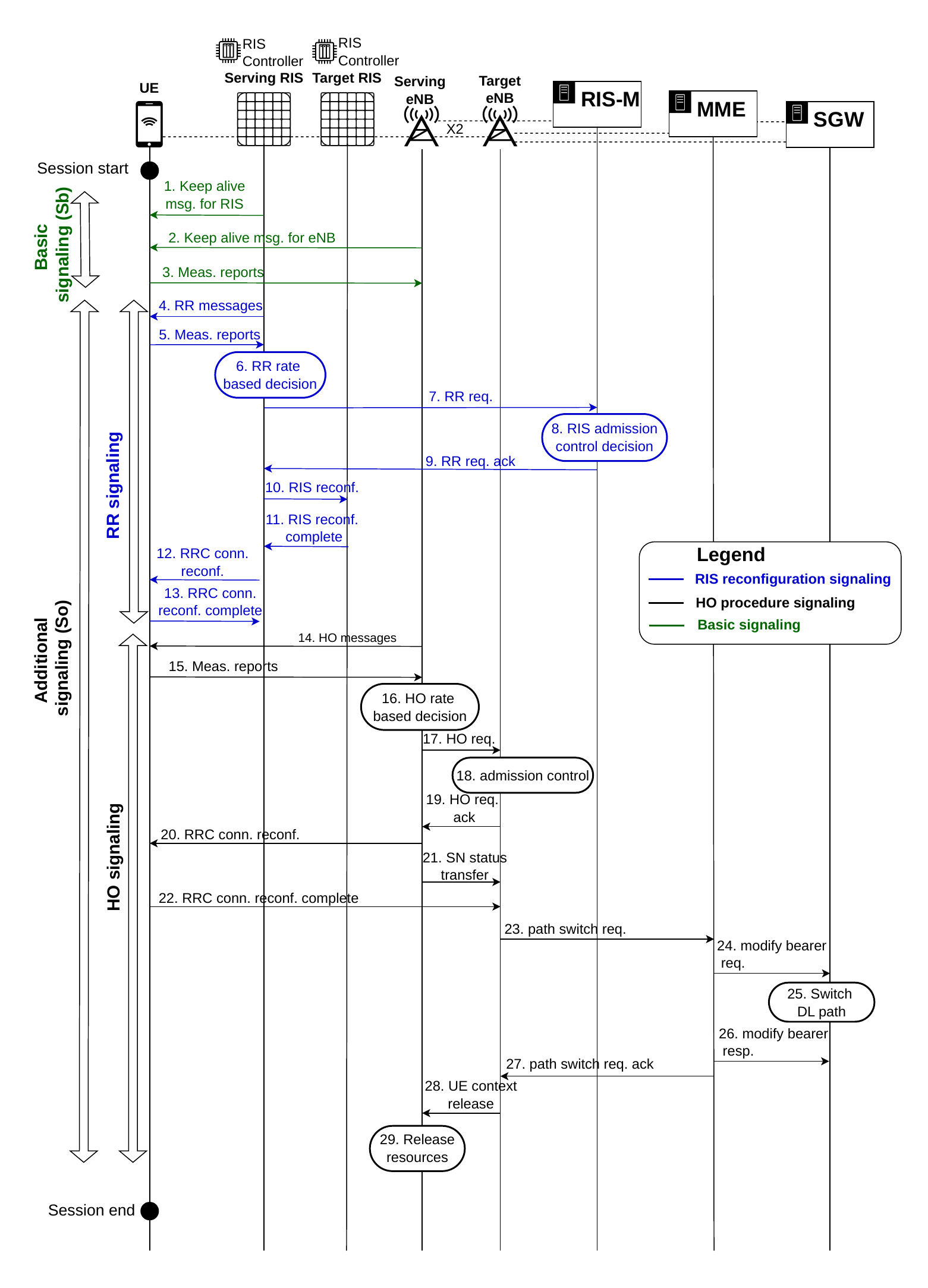}
 \caption{L3 signalling for RIS reconfiguration and HO procedure. }
\label{fig:RIS_HO_RRR_proc}
\vspace{-0.5 cm}
\end{figure*}
%

The signaling rate is only a statistical property of the rate of the arrival of RIS reconfigurations and handovers. It is a useful information which can be used to dimension network management systems. In real networks, the L3 signaling for RIS reconfiguration require a decision making unit, akin to the decision making on imminent handovers. The network management system also needs to select which RIS device to allocate, based on various criteria, such as distance to the device, received signal strength (RSS) parameters, or based on some predefined thresholds. For instance, the distance-based approach, which we used in our model measures the distance between the UE and each RIS device and eNB, and selects the RIS that provides the shortest distance. Similarly, RSS based approach would choose the path with highest RSS. 

While the amount of signaling determines the size of the control servers in the network management system, the signaling protocol itself needs to be designed such that the minimum delay is experiences when reconfiguring RIS, similar to handover. Especially for high speed connections, it is of major importance to design a signaling protocol which does not experience significant delay during the reconfiguration process. To this end, the existing handover signaling protocols can be repurposed for RIS reconfigurations.  Figure  \ref{fig:RIS_HO_RRR_proc} details one such example based on L3 signaling protocols. In the basic signaling part, the serving eNB starts the UE measurment procedure by sending control messages through different paths (Step 2). Similarly, the serving RIS can start such messaging, and in case it is designed as passive, the base station can do it on its behalf (Steps 1 and 4). The UE sends the measured reports to the serving eNB and RIS (Steps 3 and 5).
Using the measured reports the serving eNB makes a decision about RIS reconfiguration (RR), and this happens at RR rate precisely (Step 6). The serving RIS/eNB sends RR request message to RIS-M to prepare the RIS reconfiguration (Step 7), where the message contains the needed information. The RIS-M executes a RIS admission control to check if the target RIS can be used to create a new path (Step 8).  The RIS-M sends a RR request acknowledgment to the serving eNB (Step 9). The serving RIS/eNB generates and sends a RIS configuration message to the target RIS with the needed information to reconfigure the RIS (Step 10). Those messages will be read and compiled by the RIS controller of each RIS device. Once the  target RIS has successfully completed the reconfiguration, the RIS controller sends the RIS reconfiguration complete messages to the sending RIS (Step 11). The serving RIS generates and sends Radio Resource Control (RRC) message to UE with the information about the RIS reconfiguration (Step 12).  When UE successfully access the serving eNB through the new RIS configuration, it sends the RRC connection reconfiguration complete message to the serving RIS (Step 13). 

The RIS reconfiguration signaling shown here is a novel proposal for signaling and is based on the similar procedure for the HO signaling (Steps 14-29), also shown in Figure  \ref{fig:RIS_HO_RRR_proc}. It should be noted that both HO and RR rates are here illustrated only to show their major impact on the dimensioning in the network control and management infrastructure. Further optimizations are possible  and needed (for instance by off loading all signaling from RISs to the base stations) but this is outside the scope of this paper. 

\vspace{-0.2cm}



\section{Numerical Evaluation}\label{sec:results}

\subsection{Probility of RIS reconfigurations}
To analyze the RIS reconfigurations probability, we first focus on the scenarios where obstacles are known \emph{a priori} in a squared room of size $10\times 10$ m$^2$. Using the model illustrated earlier in Figure \ref{fig:RR_staticBlocking}, we assume that there is one eNB,  while the signal to the  mobile UE  might also suffer from self-blockage. Our theoretical results using  stochastic geometry are here also verified by Monte Carlo simulations implemented in Python. In the  simulation, we run $Z=10^5$ independent executions to obtain the average RR probability. We generate the coordinates of RIS devices using the PPP distribution in an area of the defined room, the base station is considered fixed in the point $(0, 4 )$, an obstacle that might represent a wall is defined between the two points $(4, 2.8)$ and $(4,5)$, which results an invisible area between the mobile UE and eNB. We set an obstacle defined by the segments between the two points $(5,2)$ and $(6.5,2)$, in order to analyze the impact of an obstacle. The main  parameters used for both the numerical evaluation and for simulation are defined in Table \ref{tab:table1}. 

\begin{table}[h!]
  \begin{center}
    \caption{Parameters used in the scenario with known obstacles}
    \label{tab:table1}
    \begin{tabular}{|c|c|c|}
    \hline 
      \textbf{Parameters} & \textbf{Static values}&\textbf{Random  values}  \\
      \hline
      $n$ & 1 & \\ \hline
       $\lambda_{RIS}$ & 0.2 & Uniform$(0,1)$ (device/m$^2$) \\
      \hline
       $\theta$ &40$^{\circ}$ & 45$^{\circ}$, 90$^{\circ}$, 120$^{\circ}$ \\ \hline
      $\xi$ & 45$^{\circ}$ & Uniform$(0^{\circ}, 90^{\circ})$ \\\hline
      r$_{RIS}$& 2 m &  \\ \hline
       d$_U$ & 2 (m/s) & Uniform$(1, 2.5)$ (m/s) \\ 
      \hline
    \end{tabular}
  \end{center}
\end{table}

Figure \ref{fig:RRR_density_det} and \ref{fig:RRR_density_unif_self}  illustrate the RIS reconfiguration probability in relation to different  RIS densities and user speed under different setups and in  three defined scenarios: (i) no obstacles using Eq. (\ref{eq:RRR_no_extra_obstacle}),  (ii)  obstacles using Eq. (\ref{eq:static-blockage}), and (iii) self-blockage with blocking angle $\theta=40^{\circ}$ using Eq. (\ref{eq:self-blockage}). The RIS density $\lambda_{RIS}$ varies from $0.001$ to $1$ device per square meter.  The results show that the RIS reconfiguration probability increases when the density of devices increases, which means that there is a higher probability to find a closer RIS  than the operating RIS  in the visible area (Figure \ref{subfig:1}).  Both the  obstacle and the self-blocking affects the RIS reconfiguration probability, which can be explained by the decrease of the visible excess area due to blockage, and thus the decreased chance of finding a closer RIS. Here, the  obstacle has a higher impact on the RR probability than a self-blocking angle of 40$^{\circ}$. Furthermore, the results indicate that a proper RIS  density can be found at which it is almost certain that a reconfiguration will occur. In Figure \ref{subfig:1}, at a density $\lambda_{RIS}=0.4$  RR probability converges to $1$ for a scenario without obstacle, while it converges at $0.7$ and $0.8$ for the self-blocking and  blockage scenarios, respectively. For density of $0.1$ device/m$^2$, RR probability is equal to $0.45, 0.5$ and $0.67$ for the scenarios without obstacles, with  obstacle, and with self-blockage, respectively.  The theoretical results coincide with the results from simulation, which validates the analysis.

\begin{figure*}
 \centering 
 \begin{subfigure}[t]{0.329\linewidth}
   \includegraphics[scale=0.36]{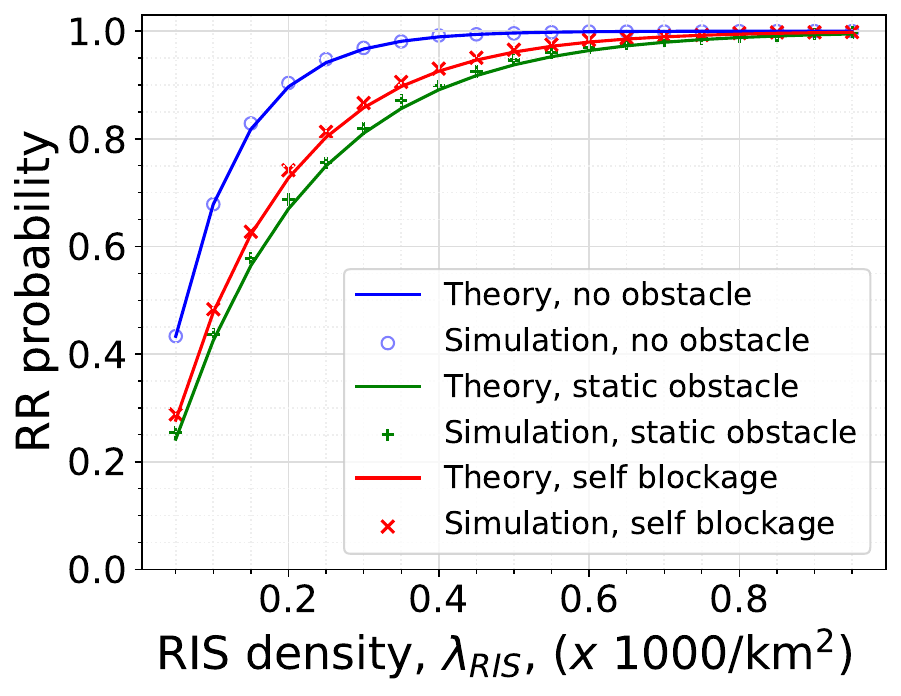}
   \caption{$\lambda_{RIS}$ variable, $d_{U}$ static}\label{subfig:1}
   \end{subfigure}
    \begin{subfigure}[t]{0.329\linewidth}
   \includegraphics[scale=0.36]{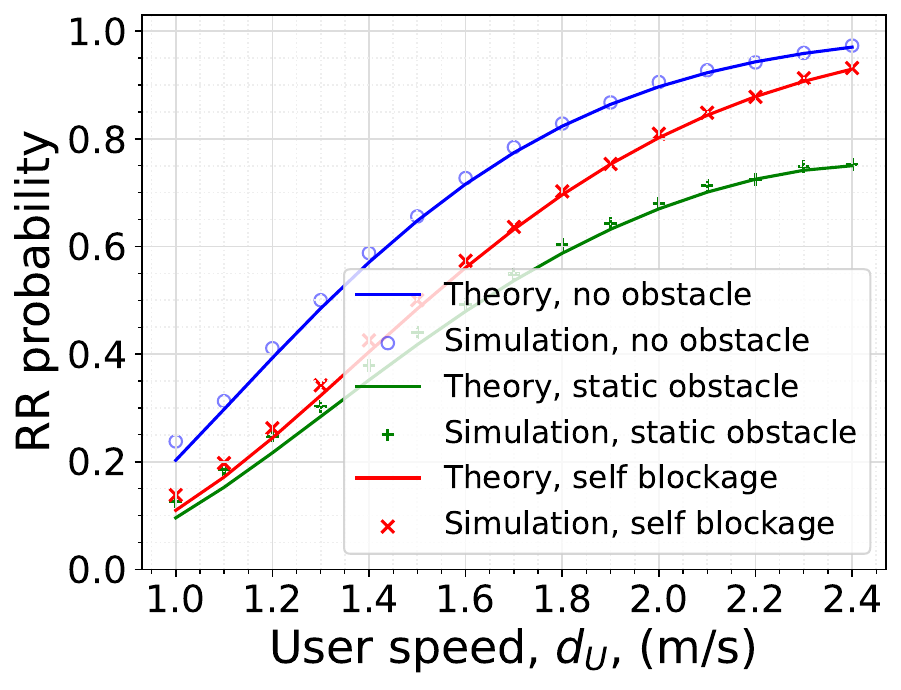}
   \caption{$d_{U}$ variable }\label{subfig:2}
   \end{subfigure}
    \begin{subfigure}[t]{0.329\linewidth}
    \includegraphics[scale=0.36]{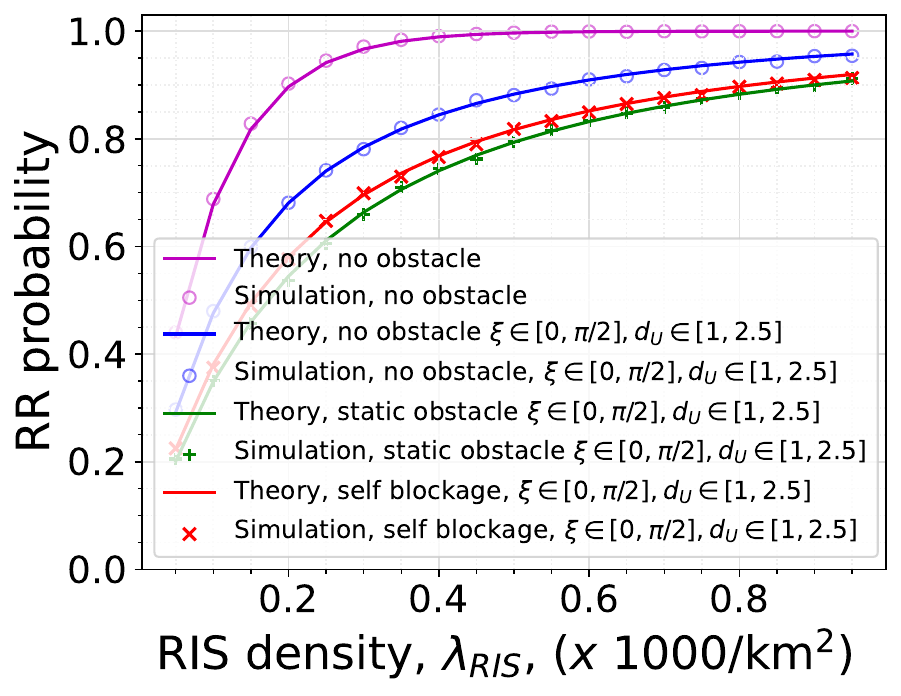}\caption{$\lambda_{RIS}$ variable, $d_{U}$ random}\label{subfig:3}
    \end{subfigure}
 \caption{RR probability vs. RIS density,  static user speed, and RIS density with uniformly distributed angle and user speed}
\label{fig:RRR_density_det}
\vspace{-0.5 cm}
\end{figure*}

\begin{figure}
 \centering 
   \includegraphics[scale=0.4]{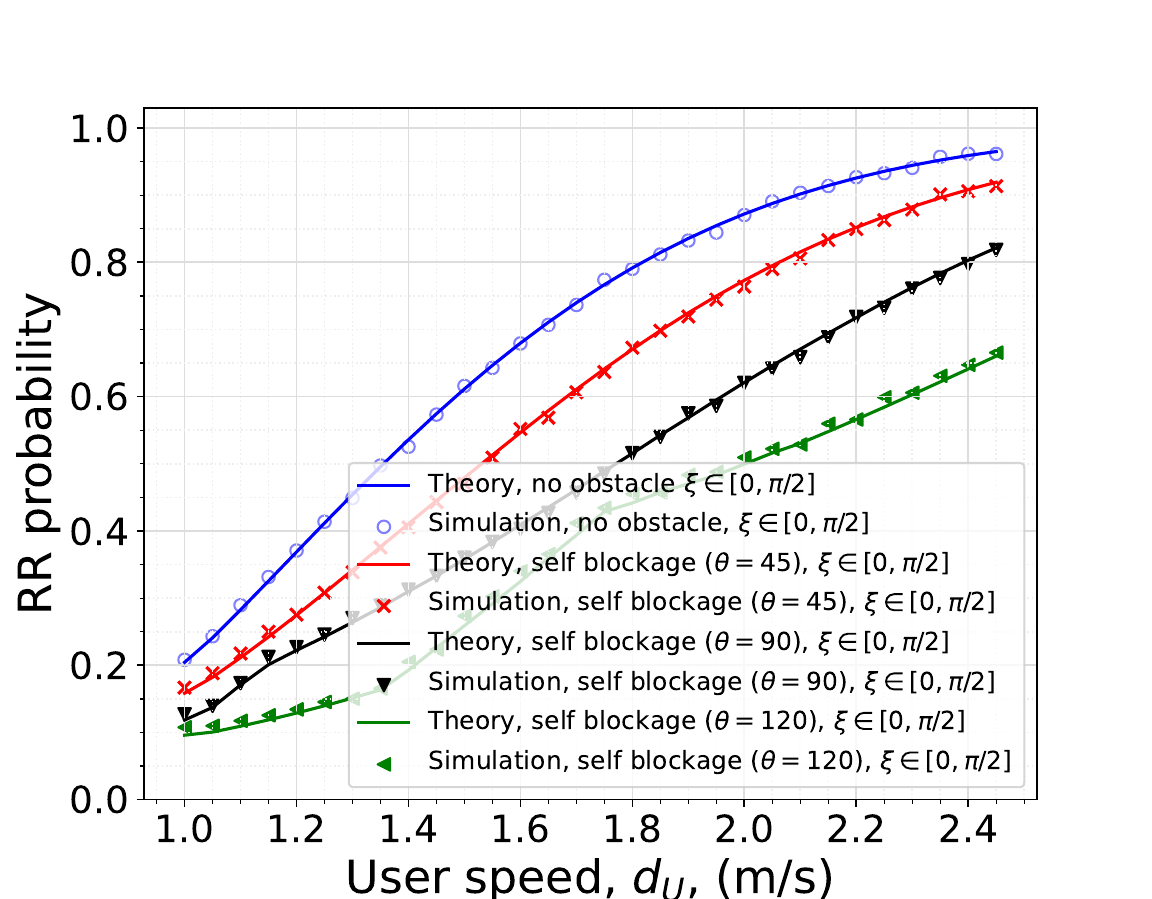}
 \caption{RR probability vs. static user speed with  uniformly distributed deviation angle $\xi$ and different self-blockage angle $\theta$. }
\label{fig:RRR_density_unif_self}
\vspace{-0.5 cm}
\end{figure}

\begin{figure*}
 \centering 
 \begin{subfigure}[t]{0.24\linewidth}
   \includegraphics[scale=0.29]{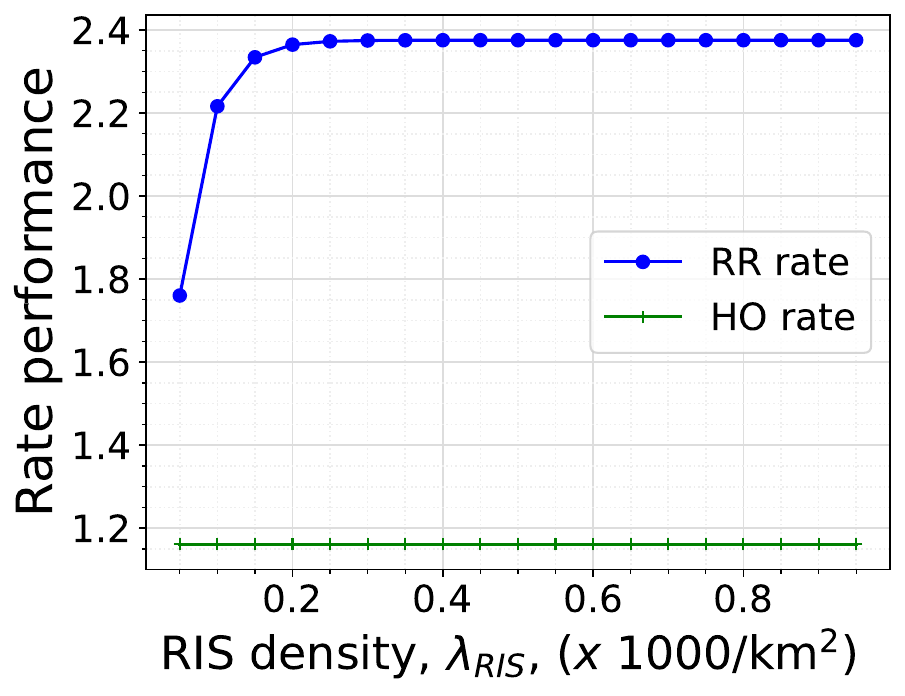}
   \caption{$\lambda_{RIS}$ }\label{subfig:RRHO1}
   \end{subfigure}
   \begin{subfigure}[t]{0.24\linewidth}
   \includegraphics[scale=0.29]{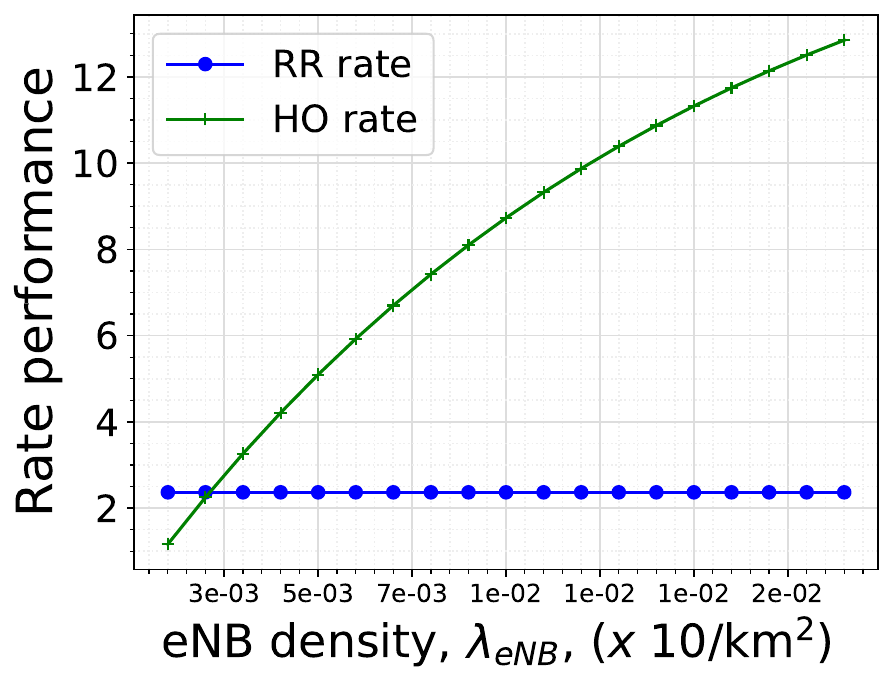}
    \caption{$\lambda_{eNB}$ }\label{subfig:RRHO2}
   \end{subfigure}
   \begin{subfigure}[t]{0.24\linewidth}
   \includegraphics[scale=0.29]{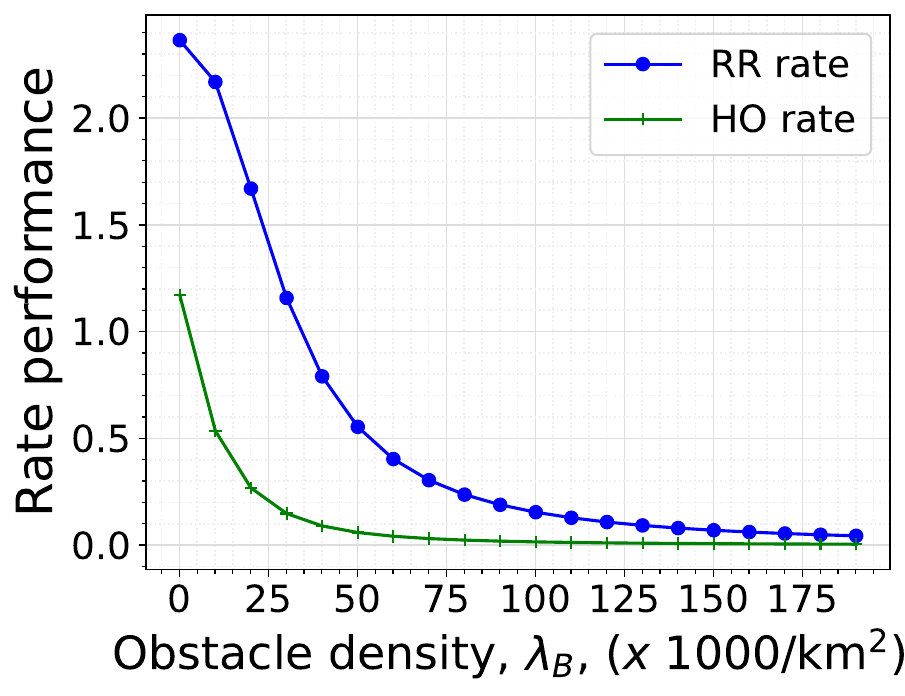}
    \caption{$\lambda_{B}$ }\label{subfig:RRHO3}
   \end{subfigure}
   \begin{subfigure}[t]{0.24\linewidth}
   \includegraphics[scale=0.29]{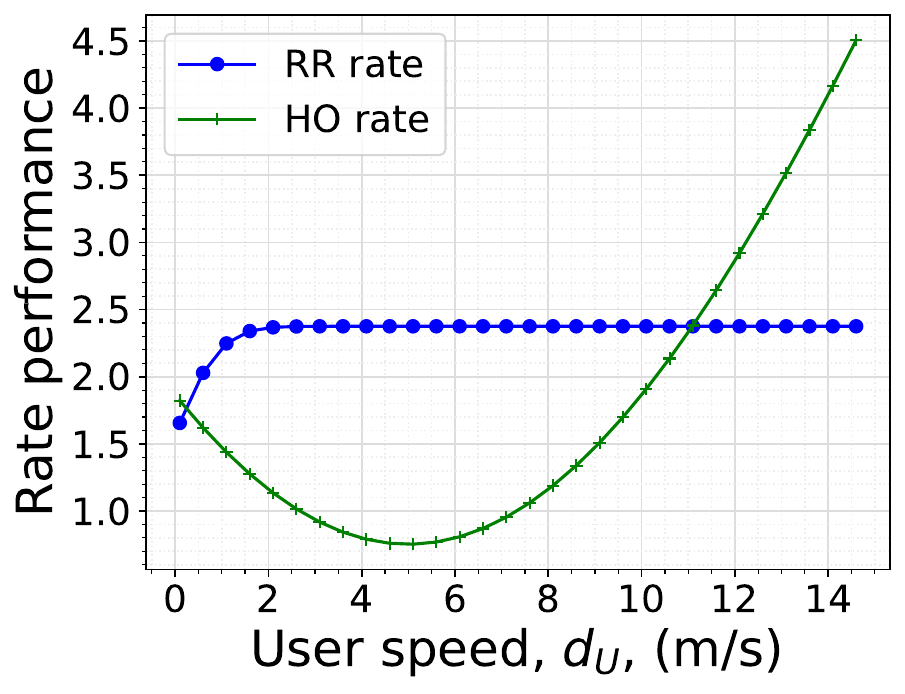}
    \caption{$d_{U}$ }\label{subfig:RRHO4}
   \end{subfigure}
 \caption{RR and HO rates in terms of RIS density, eNB density, obstacle density and  static user speed. }
\label{fig:RRvsHO}
\vspace{-0.2 cm}
\end{figure*}

 Figure \ref{subfig:2}  shows the RIS reconfiguration probability in relation to the speed of mobile user. The deviation angle is considered static and set to $\xi=40^{\circ}$ from the operating RIS. The speed values are chosen to represent a human speed, and are bounded by the size of the invisible area where the eNB cannot communicate directly to the user.   RR probability increases with the user speed in all cases. The impact of static blockage and self-blockage is small at lower speeds, which means that the area blocked by the obstacle is smaller as compared to the visible excess area. With the increasing speed, the impact of  static blockage increases, which can be explained by the fact the user is moving in the direction of the obstacle and thus the blocked excess area becomes larger; thus the probability to find a closer RIS decreases in comparison to the scenario with no obstacles and with self-blockage.

To analyze the impact of blockages on RR probability  with users moving at a random speed, we consider a uniformly distributed speed and  a static angle $\xi=40^{\circ}$.   Figure \ref{subfig:3}  shows the results for RR probability in relation to the RIS density, where RR probability is higher with higher density similar to Figure \ref{subfig:1}. Since the speed is uniformly distributed, RR probability  is lower than with the static speeds, and the obstacles have a higher impact. For density of $0.1$ device/m$^2$, RR probability  is equal to $0.3, 0.4$ and $0.5$ for scenarios without obstacles, with  obstacles, and with self-blockage, respectively. These results can be explained by the fact that the visible excess area when the speed is uniformly distributed is lower in average than the area obtained with static speeds, which implies a lower probability of having a RIS device available in the area. 


%

 Figure \ref{fig:RRR_density_unif_self}  shows a comparison between  RR probability  with different self-blocking angle $\theta=\{45, 90, 120\}$ and with no obstacles and with an obstacle. The user moves at static speeds per unit of time and a uniformly distributed angle of direction $\xi\in [0, \pi/2]$.  We find that the larger the self-blocking angle the higher the impact on RR probability, which is expected as the area blocked increases. The impact becomes higher than the impact of the  obstacle for angles $90^{\circ}$ and $120^{\circ}$. 

\subsection{HO, RR and signaling rates}
The parameters used in this numerical study are defined in Table \ref{tab:table2}. The main assumption here is that the obstacles are not known \emph{a priori}.
\begin{table}[h!]
  \begin{center}
    \caption{Parameters in the scenario with unknown obstacles}
    \label{tab:table2}
    \begin{tabular}{|c|c|c|c|}
    \hline 
      \textbf{Parameters} & \textbf{Values} &   \textbf{Parameters} & \textbf{Values} \\
      \hline
       $\lambda_{RIS}$ & 0.2 & $N_{SGW}$ & 1 \\
      \hline
       $\lambda_{eNB}$ & 0.001 & $N_{RISM}$  & 1 \\
      \hline
       $\lambda_{B}$ & 0.2 &  $R_{LoS}$& 10 km \\
      \hline
       $\theta$ &45$^{\circ}$  & ($l,w$) & ($10 $ m$, 10$ m) \\ \hline
       $\xi$ & 45$^{\circ}$ & $p_a$& 0.99 \\\hline
       r$_{RIS}$& 2 m  & d$_U$ & 2 (m/s)  \\ \hline
    \end{tabular}
  \end{center}
\end{table}

In Figure \ref{subfig:RRHO1} we vary the RIS density $\lambda_{RIS}$, which shows that HO rate, as calculated with Eq. (\ref{eq:HOR_general}), is not affected and the RR rate, which is obtained by using Eq. (\ref{eq:RRP_general}), and it increases until it converges to a certain value. This result is rather useful as it can be directly used to dimension the required number of RIS in order to assure UE  connectivity without over-provisioning. Over-provisioning is also not desirable from the signaling perspective, since it   can trigger signaling towards alternative paths with RISs which are not necessary. 

\par Figure \ref{subfig:RRHO2} shows the impact of eNB density $\lambda_{eNB}$ on the performance, which only affects HO rate, as expected.  HO rate increases sharply when eNB density increases for the studied interval and exceeds the RR rate at a certain point, which implies that  HO probability is increasing. When we vary the  obstacle density $\lambda_{B}$ in Figure \ref{subfig:RRHO3} both rates decrease with increasing value of $\lambda_{B}$ due to the decrease of chances to find a path through a RIS or through a handover to another eNB. In Figure \ref{subfig:RRHO4}, we the show the impact of UE speed $d_U$ on HO and RR rates, where the direction of movement is fixed ($\xi= 45^{\circ}$). For speeds lower than 5 m/s, UE is getting closer to the serving eNB, which decreases the chances of HO caused by self-blockage or  obstacles; thus, HO rate is decreasing. For speeds higher than 5 m/s, UE moves farther from the closest point to the serving eNB, increasing HO rate, as the chances of HO increase. For RR rate values, they increase with user speed as UE is getting farther from the serving RIS until it reaches RR probability close to 1 at a speed $\sim$ 4 m/s.

Figure \ref{fig:sigrate} shows the signaling rate performance when UE is moving radially with speed $d_U$, where we analyze the impact of self-blockage angle,  density of obstacles,  number of RIS-Manager servers, and number of SGW servers. In Figure \ref{subfig:sig1}, we evaluate the signaling rate using Eq. (\ref{eq:sig_rate_final}) while varying user speed $d_U$ with the self-blockage angle $\theta=0^{\circ}, 45^{\circ}, 90^{\circ}, 120^{\circ}, 180^{\circ}$. 
The signaling rate is a superposition of RR rate and HO rates shown in Figure \ref{subfig:RRHO4}. The impact of self-blockage  decreases when UE moves closer to the serving eNB and then increases when UE moves away, which can be explained by the increase of signaling related to HO affected by self-blockages. Figure \ref{subfig:sig2} shows the signaling rate as function of UE speed, with different obstacle densities from 10 obstacles per km$^2$ to 10$^5$ obstacle per km$^2$. For obstacle density smaller than 10$^3$ obstacle per km$^2$, the impact of obstacles on the signaling rate is negligible. When it increases to 10$^5$ obstacle per km$^2$, the signaling rate decreases $\sim$ 20\%, due to the lack of alternative paths for RIS reconfiguration and the lack of access to new eNB for handover. 

\par  In Figures \ref{subfig:sig3} and \ref{subfig:sig4}, we analyze the impact of number of RIS-Manager and SGW servers on the signaling rate, when the speed of UE is variable. In both figures, the signaling rate decreases with the increase of the number of servers. In figure \ref{subfig:sig3} the rate decreases by $\sim$ 10\% for a an average mobility speed of $d_U =$ 25 m/s, and by $\sim$ 6  \% for   $d_U =$ 0 m/s,  when we use 4 RIS-M servers instead of 1 server. The increase of number of RIS-M servers improves the  signaling when  the mobility speed increases, which make it interesting for dimensioning the number of server needed. For example, if a RIS-Manager server has a threshold of 55 req/s, we can satisfy this condition with 2 servers with an average speed of 10 m/s, and we need 4 servers with 15 m/s speed.s We can apply the same analysis for the dimensioning of SGW servers.

\begin{figure*}
 \centering 
 \begin{subfigure}[t]{0.24\linewidth}
 \includegraphics[scale=0.28]{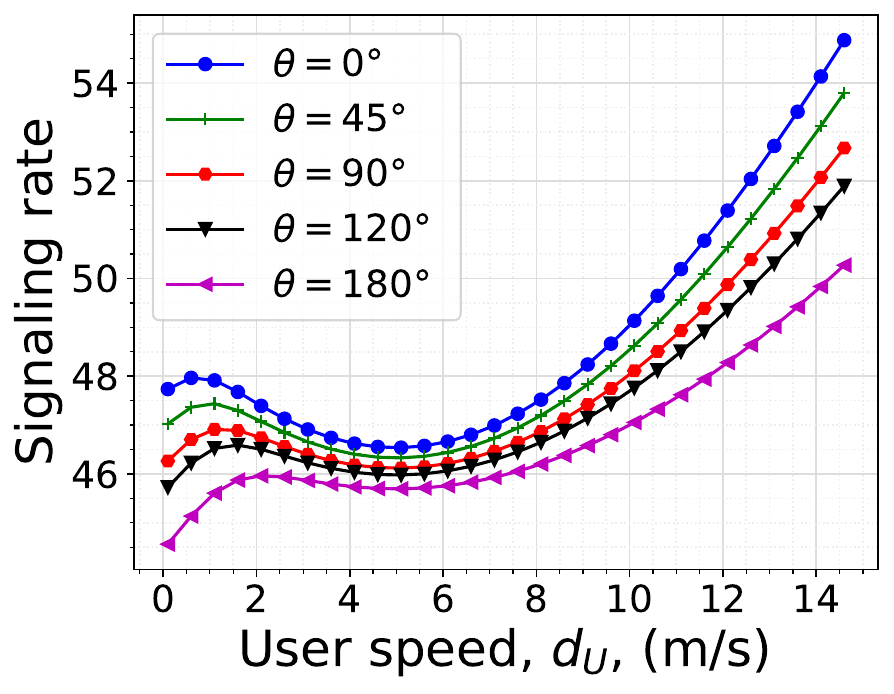}
  \caption{$\theta$  }\label{subfig:sig1}
   \end{subfigure}
   \begin{subfigure}[t]{0.24\linewidth}
 \includegraphics[scale=0.28]{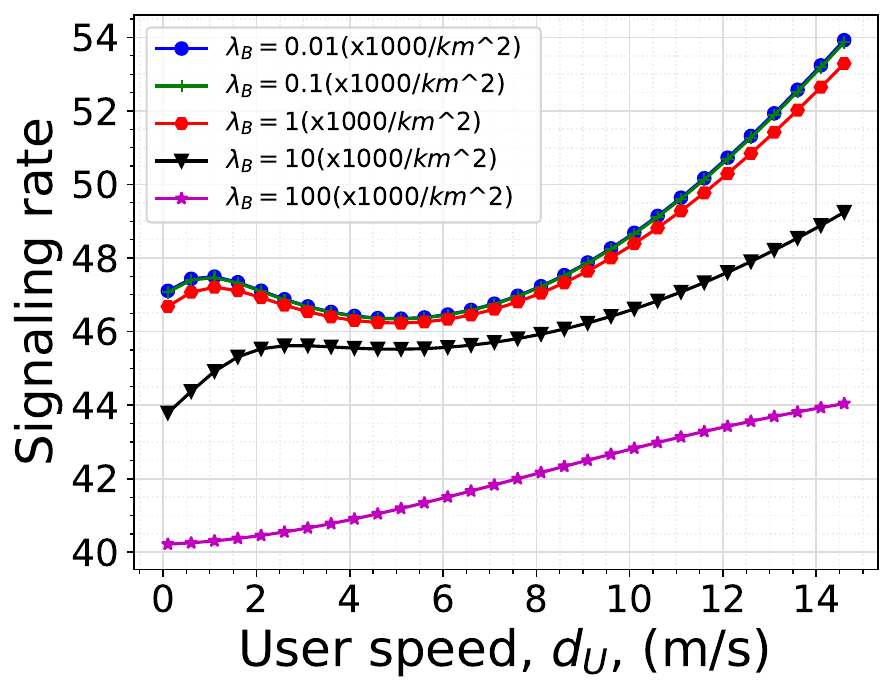}
 \caption{$\lambda_B$  }\label{subfig:sig2}
   \end{subfigure}
 \begin{subfigure}[t]{0.24\linewidth}
 \includegraphics[scale=0.28]{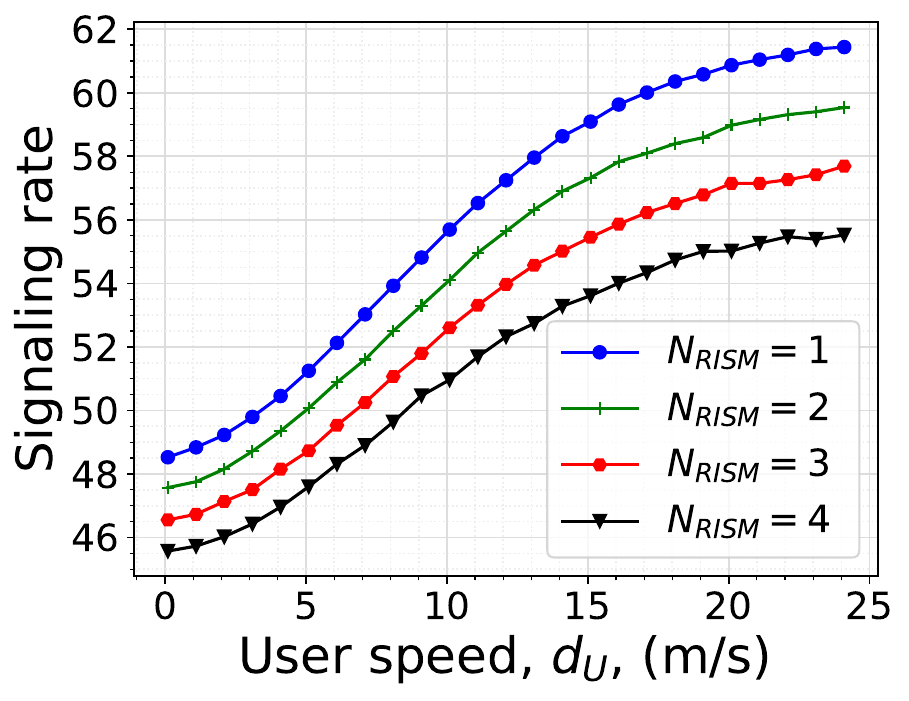}
 \caption{$N_{RISM}$  }\label{subfig:sig3}
   \end{subfigure}
 \begin{subfigure}[t]{0.24\linewidth}
  \includegraphics[scale=0.28]{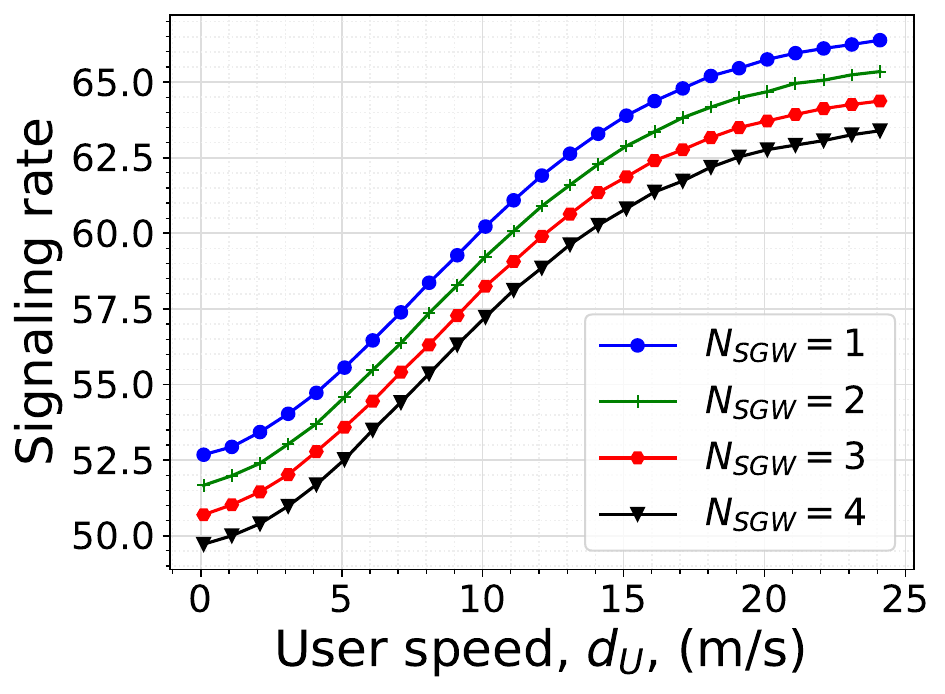}
  \caption{$N_{SGW}$  }\label{subfig:sig4}
   \end{subfigure}
 \caption{Signaling rate  in terms of static user speed.}
\label{fig:sigrate}
\vspace{-0.2 cm}
\end{figure*}

\section{Conclusion}\label{sec:conclusion}
In this paper, we provided a novel analysis of RIS reconfiguration and handover rates, defined as the dynamic reallocation of inadequate wireless paths in response to the availability of alternative paths with sufficient quality. Based on a stochastic geometry model, we analyzed these rates under various conditions and derived closed-form expressions that include known and unknown obstacles, consideration of self-blockage, RISs location density and variations in angle and speed distributions, including multiple RISs and base stations. We finally proposed and analyzed the related signaling rates of a sample signaling protocol, designed as an extension of today's known handover signaling standard and analyzed the related signaling rate theoretically.  

The results quantify the impact of obstacles on the RIS and handover reconfiguration rate as function of device density and mobility. The proposed analysis is shown as useful when dimensioning and evaluating the signaling overhead due to RIS reconfigurations, which directly impacts the planning of the related control plane server capacity in the network management system. The results show that for a certain server capacity, and an average mobility speed, our model can give the approximate needed number of RIS-manager and SGW servers. We showed examples of RIS-Manager server with maximum capacity of 55 req/s, whereby 2 servers are needed with an average user speed of 10 m/s. When we increase the number of obstacles from 10$^3$ obstacle per km$^2$ to 10$^5$ per km$^2$, the signaling rate decreases $\sim$ 20\%. Such dimensioning conclusions can be obtained from our model.

The model was shown however to have several limitations which we plan to address in future work. The model is a distance based model, without  considering the received signal strength indicator (RSSI) and SNR values to achieve handover, which will be practical for real implementations. Our signaling model is simplified to account for keep alive messages and additional signaling related to RIS reconfigurations and handover. More signaling details can be considered and analyzed such as the signaling delay performance, which is critical for real time and high speed mobile networks. Furthermore, other mobility models need to be analyzed in relations to the signaling protocols and performance.


\section*{Acknowledgment}
This work was partially supported by the DFG Project Nr. JU2757/12-1,  and by the Federal Ministry of Education and Research of Germany, joint project 6G-RIC, 16KISK031.
\vspace{-0.2cm}
\bibliographystyle{IEEEtran}
\bibliography{mybib}

\begin{thebibliography}{10}
\providecommand{\url}[1]{#1}
\csname url@samestyle\endcsname
\providecommand{\newblock}{\relax}
\providecommand{\bibinfo}[2]{#2}
\providecommand{\BIBentrySTDinterwordspacing}{\spaceskip=0pt\relax}
\providecommand{\BIBentryALTinterwordstretchfactor}{4}
\providecommand{\BIBentryALTinterwordspacing}{\spaceskip=\fontdimen2\font plus
\BIBentryALTinterwordstretchfactor\fontdimen3\font minus
  \fontdimen4\font\relax}
\providecommand{\BIBforeignlanguage}[2]{{%
\expandafter\ifx\csname l@#1\endcsname\relax
\typeout{** WARNING: IEEEtran.bst: No hyphenation pattern has been}%
\typeout{** loaded for the language `#1'. Using the pattern for}%
\typeout{** the default language instead.}%
\else
\language=\csname l@#1\endcsname
\fi
#2}}
\providecommand{\BIBdecl}{\relax}
\BIBdecl

\bibitem{tripathi2021millimeter}
S.~Tripathi, N.~V. Sabu, A.~K. Gupta, and H.~S. Dhillon, ``Millimeter-wave and
  terahertz spectrum for 6g wireless,'' in \emph{6G Mobile Wireless
  Networks}.\hskip 1em plus 0.5em minus 0.4em\relax Springer, 2021, pp.
  83--121.

\bibitem{etsi2018LTE}
ETSI, ``Lte; evolved universal terrestrial radio access network (e-utran); x2
  general aspects and principles (3gpp ts 36.420 version 10.2.0 release 10),''
  \emph{Eur. Telecommun. Stand. Inst. Tech. Rep}, vol.~10, 2018.

\bibitem{etsi20185g}
E.~ETSI, ``5g; system architecture for the 5g system (3gpp ts 23.501 version
  15.2. 0 release 15),'' \emph{Eur. Telecommun. Stand. Inst. Tech. Rep},
  vol.~15, 2018.

\bibitem{zhang2024adaptive}
B.~Zhang, I.~Filippini, Z.~Lian, and Y.~Su, ``Adaptive obstacle-aware ris
  switch for mobile mmwave access networks,'' \emph{IEEE Communications
  Letters}, 2024.

\bibitem{palitharathna2024handover}
K.~W. Palitharathna, A.~M. Vegni, P.~D. Diamantoulakis, H.~A. Suraweera, and
  I.~Krikidis, ``Handover management through reconfigurable intelligent
  surfaces for vlc under blockage conditions,'' \emph{arXiv preprint
  arXiv:2402.16873}, 2024.

\bibitem{zhang2021reconfigurable}
C.~Zhang, W.~Yi, Y.~Liu, K.~Yang, and Z.~Ding, ``Reconfigurable intelligent
  surfaces aided multi-cell noma networks: A stochastic geometry model,''
  \emph{IEEE Transactions on Communications}, vol.~70, no.~2, pp. 951--966,
  2021.

\bibitem{you2022deploy}
C.~You, B.~Zheng, W.~Mei, and R.~Zhang, ``How to deploy intelligent reflecting
  surfaces in wireless network: Bs-side, user-side, or both sides?''
  \emph{Journal of Communications and Information Networks}, vol.~7, no.~1, pp.
  1--10, 2022.

\bibitem{liu2024sustainable}
R.~Liu, S.~Zheng, Q.~Wu, Y.~Jiang, N.~Zhang, Y.~Liu, M.~Di~Renzo \emph{et~al.},
  ``Sustainable wireless networks via reconfigurable intelligent surfaces
  (riss): Overview of the etsi isg ris,'' \emph{arXiv preprint
  arXiv:2406.05647}, 2024.

\bibitem{wei2023equivalent}
H.~Wei and H.~Zhang, ``An equivalent model for handover probability analysis of
  irs-aided networks,'' \emph{IEEE Transactions on Vehicular Technology}, 2023.

\bibitem{zhang2024discrete}
H.~Zhang and H.~Wei, ``Discrete-time modeling and handover analysis of
  intelligent reflecting surface-assisted networks,'' \emph{arXiv preprint
  arXiv:2403.07323}, 2024.

\bibitem{nor2024mobile}
A.~M. Nor, O.~Fratu, and S.~Halung, ``The mobile blockers impact on riss aided
  mmwave/thz communication systems,'' \emph{IEEE Open Journal of the
  Communications Society}, 2024.

\bibitem{10462513}
E.~E. Haber, M.~Elhattab, C.~Assi, S.~Sharafeddine, and K.~K. Nguyen,
  ``Multi-irs aided mobile edge computing for high reliability and low latency
  services,'' \emph{IEEE Transactions on Network and Service Management}, pp.
  1--1, 2024.

\bibitem{joshi2019association}
K.~C. Joshi, R.~Hersyandika, and R.~V. Prasad, ``Association, blockage, and
  handoffs in ieee 802.11 ad-based 60-ghz picocells-a closer look,'' \emph{IEEE
  Systems Journal}, vol.~14, no.~2, pp. 2144--2153, 2019.

\bibitem{okaf2020analysis}
A.~Okaf and D.~Qiu, ``Analysis of blockage impact on handover rate for user
  with mobility in 5g mm-wave cellular network,'' in \emph{2020 International
  Symposium on Networks, Computers and Communications (ISNCC)}.\hskip 1em plus
  0.5em minus 0.4em\relax IEEE, 2020, pp. 1--6.

\bibitem{jiao2021enabling}
L.~Jiao and et. al., ``Enabling efficient blockage-aware handover in
  ris-assisted mmwave cellular networks,'' \emph{IEEE Transactions on Wireless
  Communications}, vol.~21, no.~4, pp. 2243--2257, 2021.

\bibitem{okaf2021analysis}
A.~Okaf, A.~Saied, and D.~Qiu, ``Analysis of self-blockage impact on handover
  probability for user with mobility in 5g mm-wave cellular network,'' in
  \emph{2021 International Symposium on Networks, Computers and Communications
  (ISNCC)}.\hskip 1em plus 0.5em minus 0.4em\relax IEEE, 2021, pp. 1--5.

\bibitem{iqbal2023analysis}
S.~B. Iqbal, S.~Nadaf, A.~Awada, U.~Karabulut, P.~Schulz, and G.~P. Fettweis,
  ``On the analysis and optimization of fast conditional handover with hand
  blockage for mobility,'' \emph{IEEE Access}, vol.~11, pp. 30\,040--30\,056,
  2023.

\bibitem{ulvan2010study}
A.~Ulvan, R.~Bestak, and M.~Ulvan, ``The study of handover procedure in
  lte-based femtocell network,'' in \emph{WMNC2010}.\hskip 1em plus 0.5em minus
  0.4em\relax IEEE, 2010, pp. 1--6.

\bibitem{NR2019}
\emph{5G New Radio-NR and NG-RAN Overall Description-Stage 2, document 3GPP TS
  38.300 Release 15}, 3 2019, version 15.5.0, Mar. 2019.

\bibitem{mollel2019handover}
M.~Mollel, M.~Ozturk, M.~Kisangiri, S.~Kaijage, O.~Onireti, M.~A. Imran, and
  Q.~H. Abbasi, ``Handover management in dense networks with coverage
  prediction from sparse networks,'' in \emph{2019 IEEE Wireless Communications
  and Networking Conference Workshop (WCNCW)}.\hskip 1em plus 0.5em minus
  0.4em\relax IEEE, 2019, pp. 1--6.

\bibitem{ghosh2020analyzing}
S.~K. Ghosh and S.~C. Ghosh, ``Analyzing handover performances of mobility
  management protocols in ultra-dense networks,'' \emph{Journal of Network and
  Systems Management}, vol.~28, no.~4, pp. 1427--1452, 2020.

\bibitem{ozdogan2019intelligent}
{\"O}.~{\"O}zdogan, E.~Bj{\"o}rnson, and E.~G. Larsson, ``Intelligent
  reflecting surfaces: Physics, propagation, and pathloss modeling,''
  \emph{IEEE Wireless Communications Letters}, vol.~9, no.~5, pp. 581--585,
  2019.

\bibitem{bai2014coverage}
T.~Bai and R.~W. Heath, ``Coverage and rate analysis for millimeter-wave
  cellular networks,'' \emph{IEEE Transactions on Wireless Communications},
  vol.~14, no.~2, pp. 1100--1114, 2014.

\bibitem{andrews2016modeling}
J.~G. Andrews, T.~Bai, M.~N. Kulkarni, A.~Alkhateeb, A.~K. Gupta, and R.~W.
  Heath, ``Modeling and analyzing millimeter wave cellular systems,''
  \emph{IEEE Transactions on Communications}, vol.~65, no.~1, pp. 403--430,
  2016.

\bibitem{jain2019impact}
I.~K. Jain, R.~Kumar, and S.~S. Panwar, ``The impact of mobile blockers on
  millimeter wave cellular systems,'' \emph{IEEE Journal on Selected Areas in
  Communications}, vol.~37, no.~4, pp. 854--868, 2019.

\bibitem{ETSIGRRIS001}
\BIBentryALTinterwordspacing
ETSI-GR-RIS-001, ``{Reconfigurable Intelligent Surfaces (RIS); Use cases,
  deployment scenarios and requirements},'' Tech. Rep., April 2023. [Online].
  Available: \url{https://portal.etsi.org/webapp/WorkProgram/ Report
  WorkItem.asp?WKI ID=63938}
\BIBentrySTDinterwordspacing

\bibitem{ETSIGRRIS002}
\BIBentryALTinterwordspacing
ETSI-GR-RIS-002, ``{Reconfigurable Intelligent Surfaces (RIS); Technological
  challenges, architecture and impact on standardization},'' Tech. Rep., August
  2023. [Online]. Available: \url{https://portal.etsi.org/webapp/
  WorkProgram/Report WorkItem.asp?WKI ID=63928}
\BIBentrySTDinterwordspacing

\bibitem{ETSIGRRIS003}
\BIBentryALTinterwordspacing
ETSI-GR-RIS-003, ``{Reconfigurable Intelligent Surfaces (RIS); Communication
  models, channel models, channel estimation and evaluation methodology},''
  Tech. Rep., June 2023. [Online]. Available: \url{https://portal.etsi.org/
  webapp/WorkProgram/Report WorkItem.asp?WKI ID=63939}
\BIBentrySTDinterwordspacing

\bibitem{ETSIGRRIS004}
\BIBentryALTinterwordspacing
ETSI-GR-RIS-004, ``{Reconfigurable Intelligent Surfaces (RIS); Implementation
  and practical considerations},'' Tech. Rep., 2023. [Online]. Available:
  \url{https://portal.etsi.org/webapp/WorkProgram/Report WorkItem.asp?WKI
  ID=67982}
\BIBentrySTDinterwordspacing

\bibitem{ETSIGRRIS005}
\BIBentryALTinterwordspacing
ETSI-GR-RIS-005, ``{Reconfigurable Intelligent Surfaces (RIS); Diversity and
  multiplexing of RIS-aided communications},'' Tech. Rep., 2023. [Online].
  Available: \url{https://portal.etsi.org/webapp/WorkProgram/ Report
  WorkItem.asp?WKI ID=68582}
\BIBentrySTDinterwordspacing

\bibitem{ETSIGRRIS006}
\BIBentryALTinterwordspacing
ETSI-GR-RIS-006, ``{Multi-functional reconfigurable intelligent surfaces (RIS):
  Modelling, optimisation, and operation},'' Tech. Rep., 2023. [Online].
  Available: \url{https://portal.etsi.org/webapp/WorkProgram/Report
  WorkItem.asp?WKI ID=68575}
\BIBentrySTDinterwordspacing

\bibitem{ETSIGRRIS007}
\BIBentryALTinterwordspacing
ETSI-GR-RIS-007, ``{Reconfigurable Intelligent Surfaces (RIS); Near-field
  channel modeling and mechanics},'' Tech. Rep., March 2023. [Online].
  Available: \url{https://portal.etsi.org/webapp/WorkProgram/ Report
  WorkItem.asp?WKI ID=70055}
\BIBentrySTDinterwordspacing

\end{thebibliography}
\vspace{-0.2cm}
\appendix

\section{HO Rate}
\subsection{Proof of Theorem \ref{thm:HO}}\label{appedix:proof_th}

\begin{proof}
The probability that eNB-UE is not blocked by a static obstacle or by self-blockage can be obtained considering that the blockages are independent. We denote by $Z_i$ the event that eNB $i$ is located in the disk of origin $o_t$  and radius $R_{eNB}$,  and is not blocked  by static obstacles or self blockage.  Using the expression defined in the assumptions (\ref{eq:staticblockage}) and (\ref{eq:selfblockage}), the probability that eNB $i$ is not blocked is given   by:   \\    
\begin{equation}\begin{split}
P(Z_i) =&(1-P(eNB^{\text{self}}) ) \\ & \times \int_{0}^{R_{LoS}} ( 1- P(eNB_i| r) ) f_{R_{LoS}}(r)   \partial r\\
     =& (1-\frac{\theta}{2\pi})  \int_{0}^{R_{LoS}} \frac{2r}{R_{LoS}^{2}}   e^{-(\beta r + \beta_0)}  \partial r\\
     = & (1-\frac{\theta}{2\pi})  \frac{2 e^{-\beta_0}}{\beta^{2}R_{LoS}^{2}}\left[ 1- (1+ \beta R_{LoS} ) e^{-\beta R_{LoS}} \right]
\end{split}
\end{equation}                
Assuming that $n$ eNB can be located in the disk with the original position of UE and radius $R_{eNB}$, each eNB, $i$,  can be blocked with a probability $P(Z_i)$. Thus the probability that $m$ eNB are not blocked by static obstacles or self-blockage follows binomial distribution, i.e.,
\begin{equation}
P(m|n) = C_{n}^{m} P(Z)^{m}\left( 1-P(Z) \right)^{n-m}
\end{equation}
Given that the number of eNBs $N$ follows PPP distribution in an area of $\mathbb{R}^2$, the probability distribution in a disk can be given as follows and based on \cite{jain2019impact}:
\begin{equation}
P_{N}(n)= \frac{(\lambda_{eNB} \pi R_{eNB}^{2} )^{n}}{n!} e^{-\lambda_{eNB} \pi R_{eNB}^{2}}
\end{equation}
Using the distribution of $N$ and the probability that $m$ eNBs are not blocked given $n$ eNBs, the  probability  that $m$ eNBs are not blocked  can be expressed as follows:
\begin{equation}\begin{split}
P(m) &= \sum_{n=0}^{\infty} P(m|n) P_{N}(n)\\
=  &\frac{(\lambda_{eNB} \pi R_{eNB}^{2} )^{m}}{m! } P(Z)^{m} e^{-\lambda_{eNB} \pi R_{eNB}^{2}} \\
 &\times  \sum_{n=0}^{\infty}     \frac{(\lambda_{eNB} \pi R_{eNB}^{2} )^{n-m}}{(n-m)!} \left( 1-P(Z) \right)^{n-m}  \\
 =  &\frac{(\lambda_{eNB} \pi R_{eNB}^{2} )^{m}}{m! } P(Z)^{m} e^{-\lambda_{eNB} \pi R_{eNB}^{2}} \\
 & \times \sum_{n-m=0}^{\infty}     \frac{\left(\left( 1-P(Z) \right) \lambda_{eNB} \pi R_{eNB}^{2} \right)^{n-m}}{(n-m)!}  \\
= &  \frac{\left(P(Z)\lambda_{eNB} \pi R_{eNB}^{2} \right)^{m}}{m! }    e^{   -P(Z)   \lambda_{eNB} \pi R_{eNB}^{2}  } 
\end{split}
\end{equation}

The HO probability can expressed as   the probability that $m>1$ eNBs are in the disk of origin $o_t$  and radius $R_{eNB}$, given as: 

\begin{equation}\begin{split}
P(HO) &=  P(m>1) \\&= \sum_{m=1}^{\infty} P(m)\\
=& \left[ \sum_{m=0}^{\infty} \frac{\left(P(Z)\lambda_{eNB} \pi R_{eNB}^{2} \right)^{m}}{m! }   - 1\right] e^{   -P(Z)   \lambda_{eNB} \pi R_{eNB}^{2}  } \\
=& \left[ e^{P(Z)\lambda_{eNB} \pi R_{eNB}^{2}}- 1\right]e^{   -P(Z)   \lambda_{eNB} \pi R_{eNB}^{2}  } \\
=& 1 - e^{   -P(Z)   \lambda_{eNB} \pi R_{eNB}^{2}  } 
\end{split}
\end{equation}

\end{proof}

\section{RIS Reconfiguration Rate}
\subsection{Proof of Proposition \ref{prop:RRR_no_obstacle}}\label{appedix:proof1}

We define the excess area $A_1$ as shown in Figure \ref{fig:RR_staticBlocking} as:
\begin{equation}\label{eq:A1form}
A_1 = A_E - A_B
\end{equation}
where $A_E$ and $A_B$ denote, respectively, the excess area   generated by the movement of the user, and the candidate blocked area representing intersection between the circle with radius $R$ and the segment of circle of origin the position of base station and a radius $>d_{OU}+R$ and intersecting the two borders of the obstacle.
\begin{figure*}
 \centering 
   \includegraphics[scale=0.16]{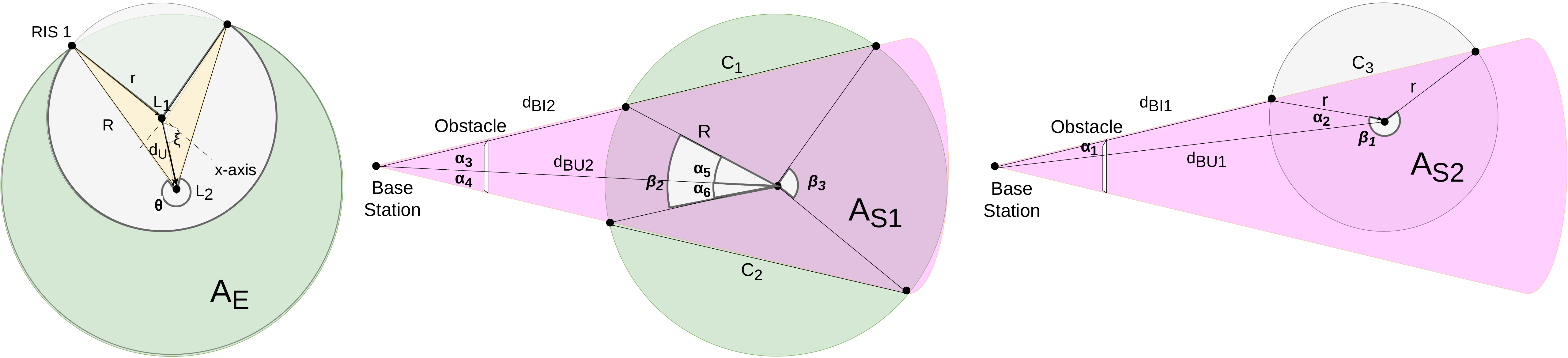}
 \caption{Steps to obtain the visible excess are for a mobile user  in RIS-aided system  with no blockage between base station and RIS or RIS and user. }
\label{fig:areas11}
\vspace{-0.5 cm}
\end{figure*}

First, in order to calculate the area  $A_E$, shown in the first illustration in  Fig. \ref{fig:areas11},  we define $A_{arc_1}$ the area generated by the arc of circle with angle $\theta_1$ of origin the location L$_2$ and radius $R$, $A_{arc_2}$ the area generated by the arc of circle with angle $\theta_2$ of origin the location L$_1$ and radius $r$, and  $A_{triangle_1}$ the area of the triangle  in  Figure \ref{fig:areas11}, with sides $R$, $r$ and $d_U$. Thus, $A_E$ can be calculated as follows:
\begin{equation}
A_E = A_{arc_1} - A_{arc_2} + 2 A_{triangle_1}
\end{equation}
The raduis $R$ of the circle of origin the location L$_2$  can be calculated in terms of the speed of the user,  the distance $d_{UR_{1}} = r$ between the operating RIS and the user at location L$_1$, and the angle of movement $\xi$. Using the  law of cosines in triangle L$_1$L$_2$R$_1$ , the expression of $R$ is given as:
\begin{equation}\label{eq:R_from_r}
R= \sqrt{d_{UR_{1}}^2 + d_{U}^2 - 2d_{UR_{1}}d_{U}\cos(\pi - \xi)}
\end{equation} 
Let $\gamma$ be the angle $\widehat{R_{1}L_2L_1}$. The angle $\widehat{L_1R_{1}L_2}$ can be expressed in terms of $\xi$ and $\gamma$ as $\widehat{L_1R_{1}L_2} = \pi -(\gamma+(\pi -\xi))=\xi -\gamma$.  Using the law of sines in the triangle  L$_1$L$_2$R$_1$ , the expression of $\gamma$ is given as: 
\begin{equation}
\gamma = \xi -sin^{-1}\left(\frac{d_{U}\sin(\xi)}{R} \right)
\end{equation} 
The angle $\theta_1$ can then be expressed as:
\begin{equation}
\begin{split}
\theta_1 =  2\pi - 2\gamma
= 2\pi - 2 \left( \xi -sin^{-1}\left(\frac{d_{U}\sin(\xi)}{R} \right) \right)
\end{split}
\end{equation}
The angle $\theta_2$ can then be expressed as:
\begin{equation}
\theta_2 = 2\pi - 2\xi
\end{equation}
Using  $\theta_1$, the area $A_{arc_1}$ can be expressed as:
\begin{equation}\begin{split}
A_{arc_1} = \frac{R^2 \theta_1}{2}
=  \pi R^2 - R^2\left( \xi -sin^{-1}\left(\frac{d_{U}\sin(\xi)}{R} \right) \right)
\end{split}
\end{equation}
The area $A_{arc_2}$ can be expressed as:
\begin{equation}\begin{split}
A_{arc_2} = \frac{r^2 \theta_2}{2}
= d_{UR_k} ^2 (\pi - \xi)
\end{split}
\end{equation}
The area $A_{triangle_1}$ can be expressed as:
\begin{equation}\begin{split}
A_{triangle_1} =& \frac{d_{UR_{k}} d_{U}\sin(\xi) }{2} 
\end{split}
\end{equation}
Thus, $A_E$ is given by:
\begin{equation}\label{eq:AE}\begin{split}
A_E = &\pi R^2 - R^2\left( \xi -sin^{-1}\left(\frac{d_{U}\sin(\xi)}{R} \right) \right)\\& -d_{UR_k} ^2 (\pi - \xi) + d_{UR_{k}} d_{U}\sin \xi 
\end{split}
\end{equation}
\begin{figure*}
 \centering 
   \includegraphics[scale=0.14]{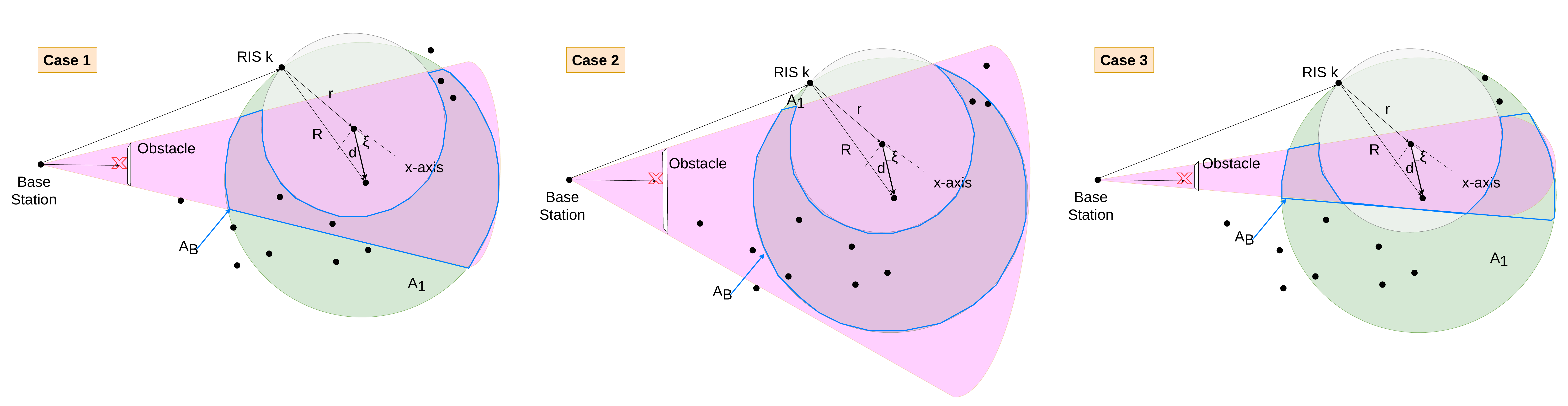}
 \caption{Blocked area situations based on obstacle position.}
\label{fig:areas1}
\vspace{-0.5 cm}
\end{figure*}
Second, in Figure \ref{fig:areas1},  we show the three possible areas, which will be explained below through calculations.  To calculate the area $A_{B}$,  we define: $A_{S_1}$ the area of intersection between the circle with radius $R$ the arc defined by the origin the base station $B$ and the borders of the obstacle as presented in Figure \ref{fig:areas11}, and $A_{S_2}$ is the area defined similar to  $A_{S_1}$ for the circle with radius $r$ presented in the third illustration of Fig. \ref{fig:areas11}. The area $A_{B}$ can be given as:
\begin{equation}
A_{B}= A_{S_1} - A_{S_2}
\end{equation}

The area $A_{S_2}$ can be composed of area $A_{\beta}$ of the  arc of angle $\beta$ and origin L$_1$, and the area $A_{triangle_2}$ of the triangle created by the origin and the two intersections $I_1$ and $I_2$. 
\begin{equation}
A_{S_2}= A_{\beta} + A_{triangle_2}
\end{equation}
We define $C_3$ the cord obtained by the intersection of the circle and the arc generated by the obstacle, $\alpha_1$ the angle between $B$, L$_1$ and the border of the obstacle. Using the law of cosines, for both triangles $B$L$_1$I$_1$ and $B$L$_1$I$_2$, we obtain the following equation and two solutions for both distance  $d_{BC1}$ and $d_{BC1} + C3$, then we conclude the value of $C_3$, as follows:
\begin{equation}\begin{split}
 X^2 &- 2d_{BU1}X\cos(\alpha_1)+ d_{BU1}^2  - r^2 = 0\\
  \Rightarrow & d_{BI1} =  d_{BU1} \cos(\alpha_1) -\sqrt{d_{BU1}^{2} \cos^{2}(\alpha_1)-d_{BU1}^2  + r^2 }, \\
  d_{BI1}& + C3 =  d_{BU1} \cos(\alpha_1) +\sqrt{d_{BU1}^{2} \cos^{2}(\alpha_1)-d_{BU1}^2  + r^2 }\\
  \Rightarrow &C3 = 2\sqrt{d_{BU1}^{2} \cos^{2}(\alpha_1)-d_{BU1}^2  + r^2 }
\end{split}
\end{equation}
Using the  Heron's formula, the area $A_{triangle_2}$ is the given by:
\begin{equation}\begin{split}
& A_{triangle_2}= (s-r)\sqrt{s(s-C3)},\\& \text{   and } s= \frac{2r+ C3}{2}= r+ \sqrt{d_{BU1}^{2} \cos^{2}\alpha_1-d_{BU1}^2  + r^2 }\\
\Rightarrow & A_{triangle_2}=  \\ &d_{BU1}  \sqrt{(d_{BU1}^{2} \cos^{2}\alpha_1 -d_{BU1}^2  + r^2 )(1-\cos^{2}\alpha_1)} 
\end{split}
\end{equation}

We define $\alpha_2$ the angle  $\widehat{R_{1}L_1I_1}$, which can be calculated using the law of cosines. The angle $\beta$ is then given as:
 \begin{equation}\begin{split}
& \alpha_2 = \cos^{-1}\left(\frac{r^2 + d_{BU1}^2 - d_{BI_1}^2}{2rd_{BU1}}\right),  \beta=\pi + 2\alpha_2\\
\Rightarrow & A_{\beta}=  \frac{\pi}{2}r^2 + r^2 \cos^{-1}\left(\frac{r^2 + d_{BU1}^2 - d_{BI_1}^2}{2rd_{BU1}}\right)
\end{split}
\end{equation}

Thus,  area $A_{S_2}$ can have two cases, either the shape of $A_{S_1}$ with given angles can be given by replacing $R$ by $r$ and $d_{BU_2}$ by $d_{BU_1}$ in the formula of  $A_{S_1}$, and will be calculated next, or given as follows:
\begin{equation}\label{eq:AS2}
\begin{split}
A_{S_2}&= \frac{\pi}{2}r^2 + r^2 \cos^{-1}\left(\frac{r^2 + d_{BU1}^2 - d_{BI_1}^2}{2rd_{BU1}}\right) \\&+ d_{BU1}  \sqrt{(d_{BU1}^{2} \cos^{2}(\alpha_1)-d_{BU1}^2  + r^2 )(1-\cos^{2}(\alpha_1))} 
\end{split}
\end{equation}

\par Area $A_{S_1}$ can exhibit two options depending on the size of the obstacle. If the obstacle hides the lower part of the circle of origin L$_2$ and radius $R$ than $A_{S_1}$ is given by Eq. (\ref{eq:AS2}), while  replacing $r$ by $R$ and $d_{BU_1}$ by $d_{BU_2}$. Otherwise,  $A_{S_1}$ can be split in four areas: (i) area of triangle $A_{triangle_3}$ defined akin to $A_{triangle_2}$ by replacing $r$ with $R$, $\alpha_1$ with $\alpha_3$ and $d_{BU_1}$ with $d_{BU_2}$, (ii) area of triangle $A_{triangle_4}$  defined similar to $A_{triangle_3}$ by changing the angle $\alpha_3$ to $\alpha_4$, (iii) area of arc $A_{\beta_2}$ of origin L$_2$ and radius $R$ and angle $\beta_2$, and (iv) an area of arc $A_{\beta_3}$ of origin L$_2$ and radius $R$ and angle $\beta_3$ as shown in Figure \ref{fig:areas11}. Thus, the area $A_{S_1}$ is given by:
\begin{equation}\label{eq:AS2_form}
A_{S_2}= A_{triangle_3} + A_{triangle_4} + A_{\beta_2} + A_{\beta_3}
\end{equation}
The area  $A_{triangle_3}$ and  $A_{triangle_4}$ are given by:
\begin{equation}\begin{split}
 &A_{triangle_3}=   \\&d_{BU2}  \sqrt{(d_{BU2}^{2} \cos^{2}(\alpha_3)-d_{BU2}^2  + R^2 )(1-\cos^{2}(\alpha_3))} \\
  &A_{triangle_4}=   \\&d_{BU2}  \sqrt{(d_{BU2}^{2} \cos^{2}(\alpha_4)-d_{BU2}^2  + R^2 )(1-\cos^{2}(\alpha_4))}
\end{split}
\end{equation}
Let $\alpha_5$ and $\alpha_6$ be the angles between $B$, L$_2$ and the top left intersection and the bottom left intersection, both angles can be calculated similar to $\alpha_2$, using $R$ instead of $r$, $d_{BU_2}$ instead of $d_{BU_1}$, and $d_{BI_3}$ and $d_{BI_4}$ instead of $d_{BI_1}$. Area $ A_{\beta_2} $ is given as follows:
\begin{equation}\label{eq:Abeta2_form}\begin{split}
 d_{BI_3} = & d_{BU_2} \cos \alpha_3 -\sqrt{d_{BU_2}^{2} \cos^{2} \alpha_3 -d_{BU_2}^2  + R^2 }, \\ d_{BI4} = & d_{BU_2} \cos \alpha_4  -\sqrt{d_{BU_2}^{2} \cos^{2} \alpha_4 -d_{BU_2}^2  + R^2 }  \\ 
\alpha_5 = &\cos^{-1}\left(\frac{R^2 + d_{BU_2}^2 - d_{BI_3}^2}{2Rd_{BU_2}}\right), \\ \alpha_6 = & \cos^{-1}\left(\frac{R^2 + d_{BU_2}^2 - d_{BI_4}^2}{2Rd_{BU_2}}\right)\\
\Rightarrow A_{\beta_2}= &\frac{R^2 (\alpha_5 + \alpha_6)}{2} \\ =&\frac{R^2 }{2} \left( \cos^{-1}\left(\frac{R^2 + d_{BU_2}^2 - d_{BI_3}^2}{2Rd_{BU_2}}\right) \right.\\ & \left. + \cos^{-1}\left(\frac{R^2 + d_{BU_2}^2 - d_{BI_4}^2}{2Rd_{BU_2}}\right) \right)\\
\end{split}
\end{equation}
Let $\beta_4$ and $\beta_5$ be the  angles of origin L$_2$ and the two point of top and bottom intersections between the blocked area and the circle with center  L$_2$ and radius $R$. Both angles can be concluded in terms of $\alpha_3$,  $\alpha_4$, $\alpha_5$, and $\alpha_6$. Thus, we can obtain $\beta_3$, and the area $A_{\beta_3}$ as follows:
\begin{equation}\label{eq:Abeta3_form}\begin{split}
 \beta_4 = & \pi - 2(\alpha_3 + \alpha_5 ),  \beta_5 =  \pi - 2(\alpha_4 + \alpha_6 )  \\ 
\Rightarrow  \beta_3 = & 2\pi - \beta_2-\beta_4-\beta_5=2\alpha_4 + 2 \alpha_3 + \alpha_5 + \alpha_6\\
\Rightarrow A_{\beta_3}=& \frac{R^2 \beta_3}{2} \\  =&\frac{R^2 }{2} \left[ 2\alpha_4 + 2 \alpha_3 +  \cos^{-1}\left(\frac{R^2 + d_{BU_2}^2 - d_{BI_3}^2}{2Rd_{BU_2}}\right) \right. \\ & \left. + \cos^{-1}\left(\frac{R^2 + d_{BU_2}^2 - d_{BI_4}^2}{2Rd_{BU_2}}\right) \right]\\
\end{split}
\end{equation}
Therefore, area $A_{S_1}$ is given as:
\begin{equation}\label{eq:AS1}\begin{split}
A_{S_1}=& A_{triangle_3} + A_{triangle_4} + A_{\beta_2} + A_{\beta_3} \\ 
= &   d_{BU_2} \left[ \sqrt{(d_{BU_2}^{2} \cos^{2}\alpha_3-d_{BU_2}^2  + R^2 )(1-\cos^{2}\alpha_3)} \right.\\ &\left.+
   \sqrt{(d_{BU_2}^{2} \cos^{2}\alpha_4-d_{BU_2}^2  + R^2 )(1-\cos^{2}\alpha_4)} \right]
\\ &+ R^2  \left[ \alpha_4 +  \alpha_3 +  \cos^{-1}\left(\frac{R^2 + d_{BU_2}^2 - d_{BI_3}^2}{2Rd_{BU_2}}     \right) \right. \\ & \left. + \cos^{-1}\left(\frac{R^2 + d_{BU_2}^2 - d_{BI_4}^2}{2Rd_{BU_2}}\right)  \right]
\end{split}
\end{equation}
The closed form of area $A_{B}$ can be expressed in terms of distance between the user $U$  and the base station $B$ and the angles of the blocked area due to the obstacle, as follows:
\begin{equation}\begin{split}\label{eq:AB}
A_{B}= &A_{S_1} - A_{S_2}\\=&
 d_{BU_2} \left[ \sqrt{(d_{BU_2}^{2} \cos^{2}\alpha_3-d_{BU_2}^2  + R^2 )(1-\cos^{2}\alpha_3)} \right. \\ & \left. +
   \sqrt{(d_{BU_2}^{2} \cos^{2}\alpha_4-d_{BU_2}^2  + R^2 )(1-\cos^{2}\alpha_4)} \right]\\&  - d_{BU1}  \sqrt{(d_{BU1}^{2} \cos^{2}\alpha_1 -d_{BU1}^2  + r^2 )(1-\cos^{2}\alpha_1)} 
\\ &+ R^2  \left[ \alpha_4 +  \alpha_3 +  \cos^{-1}\left(\frac{R^2 + d_{BU_2}^2 - d_{BI_3}^2}{2Rd_{BU_2}}     \right) \right.\\ & \left.+ \cos^{-1}\left(\frac{R^2 + d_{BU_2}^2 - d_{BI_4}^2}{2Rd_{BU_2}}\right)  \right]- \frac{\pi}{2}r^2  \\
& - r^2 \cos^{-1}\left(\frac{r^2 + d_{BU1}^2 - d_{BI_1}^2}{2rd_{BU1}}\right) 
\end{split}
\end{equation}
Finally, the closed form of the excess area $A_1$ is concluded by substituting Eqs. (\ref{eq:AE}) and (\ref{eq:AB})  into Eq. (\ref{eq:A1form}).

\end{document}